\documentclass[reqno]{amsart}
\input{commands}

\title[UAT for noisy QNN]{Quantitative Universal Approximation for Noisy Quantum Neural Networks}
\author{Lukas Gonon}
\address{School of Computer Science, University of St. Gallen and Department of Mathematics, Imperial College London}
\email{l.gonon@imperial.ac.uk}
\author{Antoine Jacquier}
\address{Department of Mathematics, Imperial College London}
\email{a.jacquier@imperial.ac.uk}
\author{Marcel Mordarski}
\address{Department of Computing, Imperial College London}
\email{marcel.mordarski25@imperial.ac.uk}

\keywords{Universal Approximation Theorem, quantum neural networks, noise, NISQ}
\subjclass{68Q12, 68T07, 65D15}
\thanks{AJ acknowledges financial support from the EPSRC grant EP/T032146/1. MM is funded through the UKRI DLA scholarship.}
\date{\today}

\begin{document}

\begin{abstract}
We prove a quantitative universal approximation theorem for noisy quantum neural networks (QNNs), with explicit non-asymptotic error bounds in terms of circuit width, depth, qubit count, and hardware noise parameters. Motivated by quantitative finance, where target functions are expectations of payoffs, we first extend existing noiseless bounds to expectation functions and derive sharp constants for exponential L\'evy models, covering the Black–Scholes price of (truncated) European Put options. We then formulate the noisy circuit via quantum channels, obtain a fidelity-based bound for general noise models, and specialise to depolarising noise with readout error, calibrated to real hardware. Finally, we provide a detailed numerical analysis for option pricing in the Black–Scholes model, testing our results on actual noisy quantum hardware.
\end{abstract}
\maketitle

\setcounter{tocdepth}{2}
\tableofcontents

\section{Introduction}

Quantum computing promises speed-up for many computational problems, yet near-term hardware is fundamentally limited by noise. A central question is therefore not \emph{whether} a given function can be represented by an idealised quantum circuit, but \emph{how accurately} it can be approximated by a noisy circuit on hardware available today, and how that accuracy degrades with the noise parameters of the device. In this paper, we provide a precise, non-asymptotic answer to this question for a class of functions of direct applied interest -- expectations arising in quantitative finance -- and validate the resulting error bounds on a real IBM quantum processor.

Parameterised quantum circuits have emerged as a powerful paradigm for approximating high-dimensional functions. Following classical universal approximation theorems for neural networks~\cite{barron2002universal,cybenko1989approximation,hornik1989multilayer,hornik1991approximation,kolmogorov1957representations,leshno1993multilayer,yarotsky2017error}, qualitative quantum analogues have appeared in~\cite{PerezSalinas2020datareuploading,perez2021one,Schuld2021}, and \emph{quantitative} versions with explicit, non-asymptotic error bounds (depending on circuit width, circuit depth, and number of qubits) have been derived for a range of function classes: Korobov~\cite{aftab2024korobov}, Fourier-integrable~\cite{gonon2023universal,gonon2026feedbackdriven}, Sobolev~\cite{manzano2025approximation}, and H\"older~\cite{yu2024non}. All of these results assume noiseless quantum circuits. This is a strong assumption: in the current \emph{noisy intermediate-scale quantum} (NISQ) regime, even modest noise radically alters circuit behaviour. Under non-unital noise, deep parameterised circuits estimating expectation values collapse to logarithmic depth~\cite{mele2024noise}, and in this truncated regime, their outputs can be classically estimated to constant precision~\cite{franca2025efficient}. The existing approximation theorems are therefore in tension with the emerging theory of noisy circuits: they guarantee expressive power in a regime that may be physically unattainable on near-term devices, raising the natural question:

\begin{center}
\textit{Which functions remain representable, with what accuracy, once noise is introduced in the system?}
\end{center}

On current hardware, uncorrected noise and decoherence set stringent depth budgets, barren plateaus make deep architectures hard to train, and naive transcriptions of expressive noiseless ans\"atze suffer dramatic performance losses~\cite{wang2021noise,larocca2025barren}: the accuracy gap between noise-free simulation and execution on IBMQ devices can exceed $60\%$ on simple classification tasks unless the model is explicitly made noise-aware~\cite{wang2021roqnn}. A growing toolbox of noise injection, post-measurement normalisation, and quantisation techniques has emerged to patch this gap~\cite{wang2022quantumnat}, and more recent noise-aware QNN architectures such as NQNN~\cite{rahman2025nqnn} maintain accuracy by embedding noise attenuation in the variational ansatz. These approaches are, however, empirical, and lack the theoretical foundations needed to \emph{quantify} the residual risk due to noise.

A second source of motivation comes from applied probability, in particular quantitative finance, where target functions are typically expectations of random processes -- option prices, risk measures, or sensitivities -- and both speed and accuracy are of primary importance. Classical tools rely on Monte Carlo techniques~\cite{glasserman2003monte} or PDE methods~\cite{morton1994numerical} and are by now very well understood. Quantum algorithms have recently entered this landscape: amplitude-estimation methods~\cite{stamatopoulos2020option} and quantum Monte Carlo~\cite{cui2024quantum} grant, asymptotically, a quadratic speed-up over the classical $\mathcal{O}(\eps^{-2})$ Monte Carlo rate, but under assumptions (large circuit depth, noiseless gates) incompatible with NISQ. A few NISQ-friendly proposals -- unary encodings~\cite{ramos2021quantum} or non-Hermitian variational formulations~\cite{kumar2024simulating} -- offer empirical assessments only, and rarely account for explicit hardware noise.

\paragraph{Contributions.}
Against this backdrop, we make four contributions.
\begin{enumerate}
\item[(i)] \emph{Universal approximation for expectations} (Section~\ref{sec:theory}). 
We extend the noiseless UAT of~\cite{gonon2023universal} from $L^1$-Fourier-integrable functions to functions of the form $f(\xx) = \EE[\Phi(\xx + L)]$ for an $\RR^d$-valued random variable $L$, with explicit bounds. 
We further provide explicit details for exponential L\'evy models the Black-Scholes price of European options.
\item[(ii)] \emph{Universal approximation under arbitrary CPTP noise} (Section~\ref{sec:NoiseModel}, Theorem~\ref{thm:general_noise_bound}). We recast the QNN of~\cite{gonon2023universal} in the density-operator formalism and prove a fidelity-based bound that holds for \emph{any} CPTP noise channel, isolating the contribution of noise. 
\item[(iii)] \emph{Sharp bound under depolarising noise with readout error} (Theorems~\ref{thm:depolar_approximation} and Proposition~\ref{prop:full_noise_approximation}). For depolarising noise, we decompose the total error into a statistical, a systematic, and an offset component, each expressed in terms of hardware parameters. 
We further show (Corollary~\ref{cor:noise_correction}) that a two-parameter post-processing layer \emph{exactly} cancels the depolarising bias -- a practical recipe for noise-aware training.
\item[(iv)] \emph{Hardware validation} (Section~\ref{sec:experiments}). We train Black-Scholes Put pricing QNNs and execute them on the IBM \texttt{ibm\_fez} Heron r2 processor. The empirical mean absolute error stays inside our analytical envelope $\eps_{\mathrm{total}}$ on $10/10$ test points.
\end{enumerate}

\noindent\textbf{Scope and limitations.}
Our bounds are sharp in~$n$ but not in the dimension~$d$ of the input; the constants for L\'evy models scale as $\mathcal{O}(T^{-d/2})$. We restrict attention to the QNN architecture of~\cite{gonon2023universal}; the techniques transfer to other Fourier-type ans\"atze, but the depolarising bound is architecture-specific. Finally, hardware execution is illustrated on a superconducting processor; the depolarising-noise model is a known idealisation, and the comprehensive noise model of Section~\ref{subsec:hardware_params} accounts for amplitude and phase damping. We discuss limitations in Section~\ref{sec:conclusion}.\\

\noindent\textbf{Outline.} 
Section~\ref{sec:theory} extends the noiseless quantitative UAT of~\cite{gonon2023universal} to expectation functions and gives sharp constants for exponential L\'evy models, with Black-Scholes as a worked example. Section~\ref{sec:NoiseModel} develops the noisy-circuit formalism via CPTP channels, derives the general fidelity bound, and specialises to depolarising noise calibrated to real hardware. Section~\ref{sec:experiments} reports numerical experiments on the Black-Scholes Put, both in simulation and on \texttt{ibm\_fez}. The supplementary material contains the hardware-parameter table for three contemporary devices (IBM, Quantinuum, Rigetti) and all figures supporting the experiments of Section~\ref{sec:experiments}.

\paragraph{Main result at a glance.}
For any $R>0$, $f \in \Ff_R$, $n \in \NN$, and depolarising noise calibrated by hardware parameters, there exist parameters~$\theta$ such that the noisy QNN output $\overline{f}^{R}_{n,\theta}$ satisfies
\begin{equation*}
    \left(\int_{\RR^{d}} |f(\xx)-\overline{f}^{R}_{n,\theta}(\xx)|^{2}\mu(\D\xx)\right)^{\half} \leqslant \underbrace{\tfrac{\alpha L^1[\widehat{f}]}{\sqrt{n}}}_{\text{statistical}} + \underbrace{(1-\alpha)\|f\|_{L^{2}(\mu)}}_{\text{systematic}} + \underbrace{R(1-\alpha)\bigl(1-\tfrac{4n}{2^{\nf}}\bigr)}_{\text{offset}} + \underbrace{4R\pf}_{\text{readout}},
\end{equation*}
where $\alpha=(1-\lambda_{\Vg})(1-\lambda_{\Ug}) \in (0,1]$ is the hardware fidelity factor, an explicit function of $\boldsymbol{\eps}$ (Section~\ref{subsec:hardware_params}). All four terms are constructively computable from live device calibration. The bound is validated against execution on the IBM \texttt{ibm\_fez} Heron r2 processor in Section~\ref{subsec:hardware_expt}; the empirical mean absolute error lies inside this envelope on all 10 test points.

\section{Quantum approximations of expectation functions}\label{sec:theory}

We prove a quantitative  Universal Approximation Theorem for expectation functions.  We consider QNNs as in~\cite{gonon2023universal}, which were also implemented in~\cite{agarwal2024extending} on a Rydberg atom array.

\subsection{Notations and background}

We recall the following spaces introduced in~\cite{gonon2023universal}:
\begin{equation}\label{eq:Fspaces}
\begin{array}{rl}
\Ff & := \Big\{f:\RR^d\to\RR: f\in \Cc\left(\RR^d\right)\cap L^{1}\left(\RR^d\right)\Big\},\\
\Ff_{R} & := \left\{f\in\Ff,
\text{ with }\LOnefhat \leqslant R\right\},
\qquad\text{for any }R>0,
\end{array}
\end{equation}
with $\Cc(\cdot)$ denoting continuous functions.
The Fourier transform reads $\widehat{f}(\bx) := \int_{\RR^d} \E^{-2 \I\pi \xx \cdot \bx} f(\xx) \D \xx$, $\bx \in \RR^d$
and
$\LOnefhat:=\int_{\RR^d}| \widehat{f}
(\bx) | \D\bx$ (which is finite for $f \in \Ff$). 
Gonon and Jacquier~\cite{gonon2023universal} showed that a function $f\in\Ff_{R}$ may be approximated by (the output of) a quantum neural network (QNN) up to accuracy at least~$\eps$ using  $\Oo (\lceil \log_2(\eps^{-1}) \rceil)$ qubits.
More specifically, define the function 
\begin{equation}\label{eq:f_QNN_NoNoise}
\ftrn(\cdot) := R\left\{1-2\Big(\PP_1
+\PP_2\Big)\right\},
\end{equation}
with 
$\PP_1,\PP_2$
the output probability of the parameterised quantum circuit (or QNN) specified in~\cite[Section~II.B]{gonon2023universal}.
Here 
$\ttheta\in\TTheta$ denotes the parameters of the QNN and $n \in \N$ the accuracy parameter, related to the number of qubits~$\nf$ via
$\nf=\lceil \log_2(4n + n_{0}) \rceil$.
The probabilities~$\PP_{1}$
and~$\PP_{2}$ depend on~$\ttheta$ and the input variable of~$\ftrn(\cdot)$, but we omit the dependence for notational conciseness.
We shall be considering the following statement, with~$\mu$ an arbitrary fixed probability measure on~$\RR^d$:

\begin{statement}\label{st:state} It holds that
$\LOnefhat<\infty$ and there exists $\Bf>0$ such that, for any
$n\in\NN$, $R\geqslant \LOnefhat$, there exists~$\ttheta\in\TTheta$ such that
\begin{equation}
\left(\int_{\RR^d} 
 \left|f(\xx) - \ftrn(\xx)\right|^2 \mu(\D \xx)\right)^{\half} \leqslant  \frac{\Bf}{\sqrt{n}}.
\end{equation}
\end{statement}

In particular,~\cite[Theorem~II.4]{gonon2023universal}  showed that this  holds for $f\in\Ff_{R}$ with $\Bf = \LOnefhat$.
The remainder of this section is devoted to extending this result to expectation functions.

\subsection{Extensions to expectation functions}\label{subsec:expectations}

We start with the following case, which leverages the results in~\cite{gonon2023universal}.

\begin{proposition}\label{prop:convolutional}
Let~$L$ be an $\RR^d$-valued random variable and  $f(\xx) := \EE[\Phi_{1}({\xx+L})]$ on~$\RR^d$. If $\Phi_{1} \in L^1(\RR^d)$ and $\bx \mapsto \EE[\E^{\I L \cdot \bx }]$ is integrable,
then Statement~\ref{st:state} holds with $\Bf = \LOnefhat$.
\end{proposition}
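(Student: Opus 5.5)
The plan is to reduce Proposition~\ref{prop:convolutional} to \cite[Theorem~II.4]{gonon2023universal}. That theorem gives Statement~\ref{st:state} with $\Bf=\LOnefhat$ for every $f\in\Ff$, so it suffices to show that $f(\xx)=\EE[\Phi_1(\xx+L)]$ belongs to $\Ff=\Cc(\RR^d)\cap L^1(\RR^d)$ and that $\LOnefhat<\infty$. Once this is done, the statement follows verbatim, with the constant $\Bf=\LOnefhat$, for every $n\in\NN$ and $R\geqslant\LOnefhat$.

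\emph{Step 1 (integrability).} Write $f(\xx)=\int_{\RR^d}\Phi_1(\xx+y)\,\PP_L(\D y)$, the convolution of $\Phi_1$ with the reflected law of $L$. By Tonelli,
\[
\int_{\RR^d}|f(\xx)|\,\D\xx\leqslant\int_{\RR^d}\int_{\RR^d}|\Phi_1(\xx+y)|\,\D\xx\,\PP_L(\D y)=\|\Phi_1\|_{L^1}<\infty .
\]
This also shows that $f$ is well defined for almost every $\xx$. By Fubini, the Fourier transform is
\[
\widehat f(\bx)=\int_{\RR^d}\E^{2\I\pi y\cdot\bx}\,\PP_L(\D y)\;\widehat{\Phi}_1(\bx)=\varphi_L(2\pi\bx)\,\widehat{\Phi}_1(\bx),
\]
where $\varphi_L(\bx)=\EE[\E^{\I L\cdot\bx}]$ is the characteristic function of $L$.

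\emph{Step 2 (Fourier integrability).} Since $\Phi_1\in L^1$, we have $|\widehat\Phi_1|\leqslant\|\Phi_1\|_{L^1}$. Hence
\[
\LOnefhat\leqslant\|\Phi_1\|_{L^1}\int_{\RR^d}|\varphi_L(2\pi\bx)|\,\D\bx=(2\pi)^{-d}\|\Phi_1\|_{L^1}\int_{\RR^d}|\varphi_L(\bx)|\,\D\bx,
\]
which is finite by the integrability hypothesis on $\varphi_L$.

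\emph{Step 3 (continuity).} This is the main subtlety: $f$ is a priori only an $L^1$ function defined almost everywhere, while $\Ff$ requires continuity everywhere. I will handle it by Fourier inversion. Since $f\in L^1$ and $\widehat f\in L^1$, the function $f$ agrees almost everywhere with the continuous function $\xx\mapsto\int\widehat f(\bx)\E^{2\I\pi\xx\cdot\bx}\D\bx$. There are two ways to make this rigorous.
\begin{itemize}
\item Integrability of $\varphi_L$ implies that $L$ has a bounded continuous density $p_L$. Then $f=\Phi_1\ast\check p_L$ is the convolution of an $L^1$ function with a bounded continuous function, so it is continuous everywhere by dominated convergence and continuity of translation in $L^1$.
\item Alternatively, replace $f$ by its continuous representative. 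This does not affect the $L^2(\mu)$ bound provided $\mu$-null issues are absent, and using the density argument above avoids the issue entirely.
\end{itemize}
I will use the first route, noting that $f(\xx)=\int\Phi_1(z)\,p_L(z-\xx)\,\D z$ is defined for every $\xx$.

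With $f\in\Ff$ and $\LOnefhat<\infty$ established, I then invoke \cite[Theorem~II.4]{gonon2023universal} to conclude.
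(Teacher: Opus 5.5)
Your proposal is correct and follows essentially the same route as the paper. Integrability of the characteristic function gives $L$ a bounded continuous density, so $f$ is a convolution lying in $\Ff$, and $\widehat{f}=\widehat{\Phi}_1\,\varphi_L(2\pi\,\cdot)$ is integrable because $|\widehat{\Phi}_1|\leqslant\|\Phi_1\|_{L^1}$; then \cite[Theorem~II.4]{gonon2023universal} applies.

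You are in fact a little more careful than the paper, which leaves the continuity of $f$ implicit and drops the $2\pi$ from its Fourier convention (harmless for integrability). Do discard the second bullet of your Step~3: $\mu$ is arbitrary and may charge Lebesgue-null sets, so altering $f$ on such a set could change the $L^2(\mu)$ error.
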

\begin{proof}
First, we aim to show that $f$ and $\widehat{f}$ are integrable. 
By change-of-variable, the characteristic function $\bx \mapsto \EE[\E^{-\I L \cdot \bx }]$ of $-L$ is integrable. Hence, by~\cite[Proposition~2.5(xii)]{ken1999levy} the random variable $-L$ has a bounded, continuous density $g$ with respect to the Lebesgue measure and we may write $f(\xx) = \int_{\RR^d} \Phi_{1}(\xx-\yy) g(\yy)\D\yy = (\Phi_{1} * g)(\xx)$. In particular, this implies that~$f$ is integrable, since both $g$ and $\Phi_{1}$ are integrable. Next, we show that  $\widehat{f}$ is integrable. 
The convolution theorem yields $\widehat{f}(\bx) = \widehat{\Phi}_{1}(\bx)\widehat{g}(\bx)$. Inserting the definition of $g$, we may represent 
$\widehat{g}(\bx)= \int_{\RR^d} \E^{-\I \xx \cdot \bx} g(\xx)\D\xx = \EE[\E^{-\I\bx \cdot (-L)}]$.
Hence, $\widehat{f}(\bx) = \widehat{\Phi}_{1}(\bx)\widehat{g}(\bx) = \widehat{\Phi}_{1}(\bx)\EE[\E^{\I\bx \cdot L}]$ is integrable, since (i) $\widehat{\Phi}_{1}$ is bounded due to $\Phi_{1} \in L^1(\RR^d)$ and (ii) $\bx \mapsto \EE[\E^{\I L \cdot \bx }]$ is integrable. 
Since~$f$ and~$\widehat{f}$ are both integrable, ~\cite[Theorem~II.4]{gonon2023universal} hence implies that  for any
$n\in\NN$, there exists~$\ttheta\in\TTheta$ such that 
\begin{equation}
\left(\int_{\RR^d} 
 \left|f(\xx) - \ftrn(\xx)\right|^2 \mu(\D \xx)\right)^{\half} \leqslant  \frac{\LOnefhat}{\sqrt{n}}
\end{equation}
for any choice of $R$ such that $\LOnefhat \leqslant R$. 
\end{proof}

As a special case, consider an exponential L\'evy model, so that
$L=(L_t)_{t \geqslant 0}$ is a $d$-dimensional L\'evy process with characteristic triplet $(\Sigma,\gamma,\nu)$,
where the $d\times d$ symmetric non-negative definite matrix ~$\Sigma$ denotes the diffusion matrix, $\gamma\in\RR^d$ is the drift and~$\nu$ a measure on~$\RR^d$ satisfying
$\nu(\{0\})=0$ and $\int_{\RR^{d}}(1\wedge|\xx^2|)\nu(\D\xx)<\infty$.
We refer the reader to~\cite[Chapters~1-2]{ken1999levy} for details about such processes.
Then $f(\xx) = \EE[\Phi(\E^{\xx+L_T})]$ corresponds to the price of a European option with payoff~$\Phi$ and maturity~$T$ in an exponential L\'evy model with~$d$ underlying asset prices modelled by $S_t = S_0 \E^{L_t}$ for $S_0 = \E^{\xx}$,
where the drift~$\gamma$ is chosen such that~$\E^{L}$ is a martingale.
For $\nu=0$, $d=1$ and $\gamma=-\half\Sigma$, this corresponds precisely to the  Black-Scholes model.
The next corollary shows that Proposition~\ref{prop:convolutional} can be directly applied to such models, with $\Phi_{1} = \Phi \circ \exp$. 
\begin{corollary}
Let $L$ be a $d$-dimensional L\'evy process such that $\bx \mapsto \EE[\E^{\I L_T \cdot \bx }]$ is integrable. 
Let $f(\xx) = \EE[\Phi(\E^{\xx+L_T})]$, $\xx \in \RR^d$ and assume  and $\Phi \circ \exp \in L^1(\RR^d)$. 
Then Statement~\ref{st:state} holds with $\Bf = \LOnefhat$.
\end{corollary}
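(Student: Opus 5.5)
The plan is to reduce the corollary directly to Proposition~\ref{prop:convolutional}. Set $\Phi_{1} := \Phi\circ\exp$, meaning $\Phi_1(\yy) = \Phi(\E^{\yy})$ with the exponential taken componentwise. Take the random variable in the proposition to be $L := L_T$. Then for every $\xx\in\RR^d$ we have
\[
f(\xx) = \EE\bigl[\Phi(\E^{\xx+L_T})\bigr] = \EE\bigl[\Phi_{1}(\xx+L_T)\bigr],
\]
which is exactly the form $f(\xx)=\EE[\Phi_1(\xx+L)]$ required by Proposition~\ref{prop:convolutional}.

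Next I check the two hypotheses of the proposition. The first is $\Phi_1 \in L^1(\RR^d)$, which is precisely the assumption $\Phi\circ\exp\in L^1(\RR^d)$. The second is integrability of $\bx\mapsto \EE[\E^{\I L\cdot\bx}] = \EE[\E^{\I L_T\cdot\bx}]$, which is assumed verbatim. Since $L_T$ is an $\RR^d$-valued random variable, nothing about the L\'evy structure is needed for the reduction itself.

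Proposition~\ref{prop:convolutional} then applies and yields Statement~\ref{st:state} with $\Bf=\LOnefhat$. Note that the proposition's proof already establishes the prerequisites $f\in\Cc(\RR^d)\cap L^1(\RR^d)$ and $\LOnefhat<\infty$. In particular, continuity of $f$ follows from the fact that $f = \Phi_1 * g$ with $g$ bounded and $\Phi_1\in L^1$.

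The only point needing care, which I regard as the main (minor) obstacle, is measurability and well-definedness. One must ensure that $\Phi\circ\exp$ is Borel measurable, so that the expectation defining $f$ makes sense. The integrability of the characteristic function of $L_T$ supplies a density, and this density guarantees that $\EE[|\Phi_1(\xx+L_T)|]<\infty$ for every $\xx$. If desired, one can add a remark that in the L\'evy case the characteristic function equals $\E^{T\psi(\bx)}$ by the L\'evy--Khintchine formula. For instance, when $\Sigma$ is positive definite, $|\E^{T\psi(\bx)}|\leqslant \E^{-\frac{T}{2}\bx\cdot\Sigma\bx}$, which is integrable. This makes the integrability assumption easy to verify, and in particular it covers the Black--Scholes case.
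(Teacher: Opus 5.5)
Your proof is correct and matches the paper's approach: the corollary is Proposition~\ref{prop:convolutional} applied directly with $\Phi_1=\Phi\circ\exp$ and $L=L_T$, and both hypotheses of that proposition are assumed verbatim. Your extra remarks are sound and make explicit points the paper leaves implicit: continuity of $f=\Phi_1*g$ follows because it is an $L^1*L^\infty$ convolution, and $\EE[|\Phi_1(\xx+L_T)|]<\infty$ follows from the bounded density of $L_T$.
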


The next corollary provides sufficient conditions ensuring that the integrability conditions on the characteristic functions are satisfied. The integrability conditions on the payoff~$\Phi$ are satisfied, for example, if~$\Phi$ is a butterfly basket option or a capped Call or Put option. 
\begin{corollary}\label{cor:Levy}
Let $L$ be a $d$-dimensional L\'evy process with non-degenerate diffusion matrix: there exists $C>0$ such that $\frac{1}{2}\bx \cdot  \Sigma \bx \geqslant C \|\bx\|^2$ for all $\bx \in \RR^d$. 
 Let $f(\xx) := \EE[\Phi(\E^{\xx+L_T})]$, on~$\RR^d$ and assume  $\Phi \circ \exp \in L^1(\RR^d)$. 
Then Statement~\ref{st:state} holds with $\Bf = \|\Phi \circ \exp\|_{L^1(\RR^d)}   \left(\frac{2\pi}{CT}\right)^{d/2}$. 
\end{corollary}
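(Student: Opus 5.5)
Since $\Phi\circ\exp\in L^1(\RR^d)$ is assumed, everything reduces to Proposition~\ref{prop:convolutional} with $\Phi_1=\Phi\circ\exp$ and $L=L_T$. Two things remain: show that $\bx\mapsto\EE[\E^{\I L_T\cdot\bx}]$ is integrable, and bound $\LOnefhat$ by the stated constant. Statement~\ref{st:state} only requires some $\Bf\geqslant\LOnefhat$, since the error bound $\LOnefhat/\sqrt{n}$ from Proposition~\ref{prop:convolutional} is then at most $\Bf/\sqrt{n}$. So it suffices to prove $\LOnefhat\leqslant \|\Phi\circ\exp\|_{L^1}(2\pi/(CT))^{d/2}$. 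The constraint $R\geqslant\LOnefhat$ is inherited unchanged.

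\textbf{Step 1: the characteristic function.} By the L\'evy--Khintchine formula, $\EE[\E^{\I L_T\cdot\bx}]=\E^{T\psi(\bx)}$ with
\[
\psi(\bx)=-\tfrac12\bx\cdot\Sigma\bx+\I\gamma\cdot\bx+\int_{\RR^d}\bigl(\E^{\I\bx\cdot\yy}-1-\I\bx\cdot\yy\,\mathbf 1_{\{|\yy|\leqslant 1\}}\bigr)\nu(\D\yy).
\]
The drift term has zero real part. The real part of the jump term is $\int(\cos(\bx\cdot\yy)-1)\,\nu(\D\yy)\leqslant 0$. Therefore
\[
|\EE[\E^{\I L_T\cdot\bx}]|=\E^{T\,\mathrm{Re}\,\psi(\bx)}\leqslant \E^{-\frac T2\bx\cdot\Sigma\bx}\leqslant \E^{-CT\|\bx\|^2}.
\]
This Gaussian bound is integrable, which gives the hypothesis of Proposition~\ref{prop:convolutional}. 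Hence Statement~\ref{st:state} holds with $\Bf=\LOnefhat$.

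\textbf{Step 2: bounding $\LOnefhat$.} In the proof of Proposition~\ref{prop:convolutional}, $\widehat f$ is the product of $\widehat{\Phi\circ\exp}$ with the characteristic function of $L_T$, evaluated at the frequency matching the paper's Fourier convention. We use $\|\widehat{\Phi\circ\exp}\|_\infty\leqslant\|\Phi\circ\exp\|_{L^1}$ together with the Gaussian bound from Step 1, which gives
\[
\LOnefhat\leqslant \|\Phi\circ\exp\|_{L^1}\int_{\RR^d}\E^{-CT\|\bx\|^2}\D\bx .
\]
The Gaussian integral equals $(\pi/(CT))^{d/2}$, which is at most $(2\pi/(CT))^{d/2}$. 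That is the claimed $\Bf$.

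\textbf{Main obstacle.} The only delicate point is the bookkeeping of $2\pi$ factors. The transform is defined with $\E^{-2\I\pi\xx\cdot\bx}$, but the proof of Proposition~\ref{prop:convolutional} writes $\widehat g(\bx)=\EE[\E^{-\I\bx\cdot(-L)}]$, so the characteristic function may effectively be evaluated at $2\pi\bx$. Under that convention the integral becomes $\int\E^{-4\pi^2CT\|\bx\|^2}\D\bx=(4\pi CT)^{-d/2}$, which is smaller still. I would carry out the computation carefully under the paper's convention and check that, in either reading, the Gaussian integral is bounded by $(2\pi/(CT))^{d/2}$. The stated constant is a safe upper bound, so slack in it is harmless.
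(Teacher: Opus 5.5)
Your proposal is correct and follows the paper's proof: a Gaussian bound $|\EE[\E^{\I L_T\cdot\bx}]|\leqslant\E^{-CT\|\bx\|^2}$ on the characteristic function (which you derive directly from the L\'evy--Khintchine formula, where the paper cites an earlier result), then Proposition~\ref{prop:convolutional}, then $\|\widehat{\Phi}_1\|_\infty\leqslant\|\Phi_1\|_{L^1}$ times a Gaussian integral. Your value $(\pi/(CT))^{d/2}$ for that integral is the correct one (the paper writes it as equal to $(2\pi/(CT))^{d/2}$, which is only an upper bound), and your check that either $2\pi$-convention for $\widehat{g}$ stays below the stated constant settles a bookkeeping issue the paper glosses over.
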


\begin{proof}
Following the proof of~\cite[Theorem~13]{Gonon2021}, then
$ |\EE[\E^{\I L_T \cdot \bx }] |\leqslant \E^{-CT\|\bx \|^2}$. 
In particular, $\bx \mapsto \EE[\E^{\I L_T \cdot \bx }]$ is integrable. Moreover, with $\Phi_{1} = \Phi \circ \exp$ and using the representation  $\widehat{f}(\bx)  = \widehat{\Phi}_{1}(\bx)\EE[\E^{i\bx \cdot L_T}]$ from the proof of Proposition~\ref{prop:convolutional}, 
we can  bound $\LOnefhat$ as follows:
$$
\LOnefhat =\int_{\RR^d}| \widehat{f}
(\bx) | \D\bx  \leqslant \|\Phi_{1} \|_{L^1} \int_{\RR^d}\left|\EE[\E^{\I L_T \cdot \bx }]\right| \D\bx \leqslant  \|\Phi_{1}\|_{L^1} \int_{\RR^d}\E^{-CT\|\bx \|^2} \D\bx
 = \|\Phi_{1}\|_{L^1} \left|\frac{2\pi}{CT}\right|^{\frac{d}{2}}.
$$
\end{proof}

\subsection{Examples}
\subsubsection{Gaussian density}
\label{sec:GaussianDensity}
Consider the one-dimensional  Gaussian density
$f_{\sigma}(x) = \frac{1}{\sigma\sqrt{2\pi}} \exp\left\{-\frac{x^2}{2\sigma^2}\right\}$
for $x \in \RR$,
and $\sigma>0$.
Its Fourier transform reads
$\widehat{f}_{\sigma}(\xi) = \E^{-2\pi^2\sigma^2 \xi^2}$ for all $\xi \in \RR$
and 
$L^1[\widehat{f}_{\sigma}] = \frac{1}{\sigma\sqrt{2\pi}}$.

\subsubsection{Put option in the Bachelier model}
The same reasoning also applies to (arithmetic) L\'evy models $f(\xx) = \EE[\Phi_{1}({\xx+L_T})]$, $\xx \in \RR^d$ with $\Phi_{1} \in L^1(\RR^d)$. For example, consider the Bachelier model $S_{t} = S_{0} + \sigma W_{t}$ with $\sigma>0$, for some standard Brownian motion~$W$. 
In this case, from Corollary~\ref{cor:Levy}
with $C=\frac{1}{2}\sigma^2$,
then Statement~\ref{st:state} holds with
$\Bf = \|\Phi_{1}  \|_{L^1(\RR)}   \frac{2\sqrt{\pi}}{\sigma \sqrt{T}}$.
For example, if $\Phi_{1}(x) = (K-x)^+ \boldsymbol{1}_{\{x\geqslant 0\}}$ is the payoff of a Put option (defined on the whole real line), then 
$\|\Phi_{1}\|_{L^1(\RR)} = \int_{0}^K (K-x) \D x = \frac{K^2}{2}$ and hence 
$\Bf =  \frac{\sqrt{\pi}K^2}{\sigma \sqrt{T}}$. 
We can, however, get a sharper bound by directly computing the Fourier transform $\widehat{\Phi}_{1}(\xi) = \frac{K}{2 \I \pi \xi } + \frac{1-\E^{-2 \I \pi K \xi}}{(2\pi \xi)^2 }$ by integration by parts. Since $|\E^{\I u} -1 |\leqslant |u|$, then
$
|\widehat{\Phi}_{1}(\xi)| \leqslant  \frac{K}{\pi |\xi| }$ for $|\xi|>0$. Moreover, 
$ |\E^{-\I u} -1+\I u| \leqslant \frac{1}{2}|u|^2$ for all $u \in \RR$ and thus 
$|\widehat{\Phi}_{1}(\xi)| = | \frac{-1+\E^{-2 \I \pi K \xi}+2\I\pi \xi K }{(2\pi \xi)^2 }| \leqslant \frac{K^2}{2}$. 
Splitting the integral and inserting these bounds, we obtain, for any $a>0$,
\begin{align*}
\LOnefhat  & = \int_{\RR}   |\widehat{\Phi}_{1}(\xi)| |\EE[\E^{\I L_T \cdot \xi }] | \D\xi\\
& \leqslant \int_{-1/a}^{1/a} \frac{K^2}{2} \E^{-\frac{1}{2}\sigma^2 T \xi^2 }  \D \xi + 2
\int_{1/a}^\infty  \frac{K}{\pi |\xi| } \E^{-\frac{1}{2}\sigma^2 T \xi^2 }  \D\xi\\
 &=  K^2\frac{\sqrt{\pi}}{\sigma\sqrt{2T}}\mathrm{erf}\left(\frac{\sigma\sqrt{T}}{a\sqrt{2}}\right) 
 +  \frac{K}{\pi}   \mathrm{E}_1\left(
\frac{\sigma^2 T }{2 a^2}\right) =: \Bf(a),
\end{align*}
where~$\mathrm{E}_1(z):=\int_{z}^{\infty}x^{-1}\E^{-x}\D x$ denotes the exponential integral and~$\mathrm{erf}(\cdot)$ the error function.
Since
$\partial_{z}\mathrm{erf}(z) = \frac{2}{\sqrt{\pi}}\E^{-z^2}$
and
$\partial_{z}\mathrm{E}_1(z) = -z^{-1}\E^{-z}$, then
$\partial_{a}\Bf(a)
= \frac{2a - K\pi}{a^2\pi}K\exp\left\{-\frac{\sigma^2 T}{2a^2}\right\}$,
and thus~$\Bf(\cdot)$ is decreasing on $(0,\frac{K\pi}{2})$ and increasing on $(\frac{K\pi}{2},\infty)$,
with minimum attained at $\frac{K\pi}{2}$,
so the best upper bound is $\Bf = \Bf(\frac{K\pi}{2})$.
Working with quantities normalised by the spot price,  for example, we obtain for $K=1$, $\sigma=0.2$, $T=1$ and $a=K\pi/2$, then $\Bf \approx 0.1986$.

\subsubsection{Black-Scholes}\label{sec:BS}
In the Black-Scholes model, the stock price (assuming no interest rate) is the unique strong solution to the stochastic differential equation
$\D S_{t} = \sigma S_{t}\D W_t$ starting from $S_{0} >0$.
The European Put price with maturity $T>0$ and strike $K>0$ then reads
$$
\Put_{\BS}(S_{0}, K, T, r, \sigma) = K\Nn(-d_{-}) - S_{0}\Nn(-d_{+}),
$$
with~$\Nn(\cdot)$ the Gaussian cumulative distribution function and
$d_{\pm} := \frac{\log(S_{0}/K)}{\sigma\sqrt{T}} \pm \frac{1}{2}\sigma\sqrt{T}$.

\subsubsection*{Truncated Put option in Black-Scholes}\label{ex:TruncatedPutBound}
Fix $\underline{K} \in (0,K)$ and consider the Black-Scholes model, that is
$f(x) = \EE[\Phi_{1}({x+L_T})]$, $x \in \RR$
and $\Phi_{1} = \Phi\circ\exp$
with $\Phi(x) = (K-x)_{+}\one_{\{x\geqslant \underline{K}\}}$. 
Now, 
$$
\|\Phi_{1}\|_{L^1(\RR)}
= \int_{\RR} \left(K-\E^{x}\right)_{+}\one_{\{\E^{x}\geqslant \underline{K}\}} \D x
= \int_{\log(\underline{K})}^{\log(K)} \left(K-\E^{x}\right)\D x
 = K\Big(\log(K)-\log(\underline{K})\Big)
 - (K -\underline{K}),
$$
which is finite.
Corollary~\ref{cor:Levy} with $C=\frac{1}{2}\sigma^2$ thus yields Statement~\ref{st:state} with
$\Bf =  [K(\log(K)-\log(\underline{K})) - (K -\underline{K})]  \frac{2\sqrt{\pi}}{\sigma \sqrt{T}}$. 
We can obtain a sharper bound by directly computing the Fourier transform $\widehat{\Phi}_{1}(\xi) = \int_{-\underline{k}}^k (K-\E^x) \E^{-2 \I\pi x  \xi} \D x$, 
where we write $\underline{k}=\log(\underline{K})$ and $k = \log(K)$. We obtain
$$
\widehat{\Phi}_{1}(\xi) = 
\frac{K}{-2 \I\pi \xi} 
\left(\E^{-2 \I\pi k  \xi} - \E^{-2 \I\pi \underline{k} \xi} \right) + \frac{1}{-2 \I\pi \xi +1 } \left(\E^{-(2 \I\pi \xi - 1)k} - \E^{-(2\I\pi \xi - 1) \underline{k} } \right),
$$
hence
$|\widehat{\Phi}_{1}(\xi)| \leqslant \frac{K}{\pi |\xi|}  + \frac{K + \underline{K}}{\sqrt{1+4 \pi^2 \xi^2}} \leqslant \frac{3K+ \underline{K}}{2\pi |\xi|}$
and $|\widehat{\Phi}_{1}(\xi)| \leqslant \|\Phi_{1}\|_{L^1(\RR)}$. We estimate, for any $a>0$,
\begin{align*}
\LOnefhat  & = \int_{\RR}   |\widehat{\Phi}_{1}(\xi)| |\EE[\E^{\I L_T \cdot \xi }] | \D\xi\\
& \leqslant \int_{-1/a}^{1/a} [K(k-\bar{k})-(K-\underline{K})] \E^{-\frac{1}{2}\sigma^2 T \xi^2 }  \D \xi + 2 
\int_{1/a}^\infty  \frac{3K+ \underline{K}}{2\pi |\xi| } \E^{-\frac{1}{2}\sigma^2 T \xi^2 }  \D\xi\\
 &=  2 [K(\log(K)-\log(\underline{K}))-(K-\underline{K})] \frac{\sqrt{\pi}}{\sigma\sqrt{2T}}\mathrm{erf}\left(\frac{\sigma\sqrt{T}}{a\sqrt{2}}\right) 
 +   \frac{3K+ \underline{K}}{2 \pi}  \mathrm{E}_1\left(
\frac{\sigma^2 T }{2 a^2}\right) =: \Bf(a).
\end{align*}
Similarly to the previous example, $\Bf(\cdot)$ attains its minimum at 
$ a^* =   \frac{2 [K(\log(K)-\log(\underline{K}))-(K-\underline{K})] \pi}{(3K+ \underline{K})}$. For $K=1$, $\underline{K} = 0.4$, $\sigma=0.2$, $T=1$ and $a=a^*\approx 0.584$ we obtain $\Bf \approx 2.316$.

\subsubsection*{Put option in Black-Scholes}
For a Put option in a Black-Scholes model, we can then apply the bound from Example~\ref{ex:TruncatedPutBound} and combine it with a truncation estimate. Let $f_i(x) = \EE[\Phi_{i}({x+L_T})]$, $x \in \RR$
with $\Phi_{1}$ as in the (truncated) previous example and $\Phi_{2} = \Phi\circ\exp$
with $\Phi(x) = (K-x)_{+}$. Then $f_2$ is the true Put price and~$f_1$ can be approximated using a QNN with a bound given in the previous example, and we can estimate the difference
\[\begin{aligned}
|f_1(x)-f_2(x)| =
\Big| \EE\left[\left(K-\E^{x+L_T}\right)_{+} \one_{\{\E^{x+L_T}\leqslant \underline{K}\}}\right] \Big| 
& \leqslant K \P(L_T < \log(\underline{K})-x) 
\\
& = K \P\left( \Nn(0,1)\leqslant  \frac{ \log(\underline{K})-x + \frac{\sigma^2 T}{2}}{\sigma \sqrt{T}} \right).
\end{aligned}
\]
For example, at the money $x = 0$, for $K=1$, $\underline{K} = 0.4$, $\sigma=0.2$, $T=1$,
the error is around $4 \cdot 10^{-6}$.

\section{Noisy Quantum neural networks}
\label{sec:NoiseModel}
We now develop quantitative Universal Approximation Theorems for noisy QNNs. These results apply both to the Fourier-integrable functions in \eqref{eq:Fspaces} and to expectation functions, as in Section~\ref{subsec:expectations},
and in particular, to the Black-Scholes model. 
We start with the setup from~\cite[Section~2]{gonon2023universal},
where a quantum circuit evolves an initial quantum state to an output one.
To account for noise in quantum hardware, this setup must be translated to a density operator formalism.
In the wavefunction formalism of~\cite{gonon2023universal}, the initial $\nf$-qubit state is $\ket{0}^{\otimes \nf}$. 
In the density operator formalism, this corresponds to the pure state density matrix
$\rho_0 = \ketbra{0}{0}^{\otimes \nf}$, which then evolves to
\begin{equation}\label{eq:rho1_rho2}
\rho_1 = \Vg \rho_0 \Vg^{\dagger}
\qquad\text{and}\qquad
    \rho_2 = \Ug(\ttheta, \xx) \rho_1 \Ug^{\dagger}(\ttheta, \xx),
\end{equation}
for some unitary operators~$\Vg$ and~$\Ug$, 
the latter being a function of parameters~$\ttheta$ and inputs~$\xx$.
To add noise in the system, we require different operations, namely quantum channels, which we now recall briefly and refer the reader to~\cite[Section~3.5]{scherer2019mathematics} for full details.

\subsection{Background on Quantum operations}\label{sec:QuantumChannels}
Let~$\Hh$ be a finite-dimensional Hilbert space
and denote
$\Dd(\Hh):=\Big\{
\rho \in \Hh: \rho^\dagger = \rho, \rho\geqslant 0,\Tr[\rho] = 1
\Big\}$,
the space of density matrices with trace equal to one.

\begin{definition}\label{def:QChannel}
A quantum
channel~$\Psi$ is a completely positive trace-preserving (CPTP) convex-linear map from $\Dd(\Hh)$ to $\Dd(\Hh)$.
\end{definition}
The \emph{completely positive} property means that any extension
$\Psi\otimes \Ig$ is also a positive map, where~$\Ig$ is the identity operator of any (finite) dimension.
The following result, due to Kraus~\cite{kraus1971general} gives a useful characterisation of quantum channels:
\begin{proposition}
Given quantum channel $\Psi:\Dd(\Hh)\to\Dd(\Hh)$, there exists a sequence of linear maps $\{\Kr_{\ell}\}_{\ell}$, called 
the Kraus operators,
 such that the \emph{operator-sum representation} 
$\Psi(\rho) = \sum_{\ell}\Kr_{\ell}\rho \Kr_{\ell}^\dagger$
holds for all $\rho\in\Dd(\Hh)$,
with $\sum_{\ell}\Kr_{\ell}\Kr_{\ell}^\dagger = \Ig$.
The sums over~$\ell$ run from~$1$ to at most~$\dim(\Hh)^2$.
\end{proposition}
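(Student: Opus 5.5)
We prove the Kraus representation through the Choi--Jamio\l{}kowski correspondence. Let $N := \dim(\Hh)$ and fix an orthonormal basis $\{\ket{i}\}_{i=1}^{N}$. Since $\Psi$ is convex-linear on $\Dd(\Hh)$, we first extend it to a linear map on all of $\mathcal{L}(\Hh)$. Every Hermitian operator is a real combination of density matrices (split it into positive and negative parts and normalise each), and every operator is $A = H_1 + \I H_2$ with $H_1, H_2$ Hermitian. Convex-linearity makes the resulting extension well defined, and complete positivity and trace preservation carry over.

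Next consider the unnormalised maximally entangled vector $\ket{\Omega} := \sum_{i}\ket{i}\otimes\ket{i} \in \Hh\otimes\Hh$ and the Choi operator
\[
J(\Psi) := (\Psi\otimes\Ig)(\ketbra{\Omega}{\Omega}) = \sum_{i,j}\Psi(\ketbra{i}{j})\otimes\ketbra{i}{j}.
\]
Complete positivity gives $J(\Psi)\geqslant 0$. This is the one place where the full strength of complete positivity is used; mere positivity of $\Psi$ would not suffice. Since $J(\Psi)$ is positive semidefinite on a space of dimension $N^2$, the spectral theorem gives a decomposition $J(\Psi) = \sum_{\ell=1}^{r}\ket{v_\ell}\bra{v_\ell}$ with $r\leqslant N^2$, where each $\ket{v_\ell}$ is an eigenvector scaled by the square root of its eigenvalue.

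Each vector $\ket{v_\ell}\in\Hh\otimes\Hh$ defines an operator $\Kr_\ell$ through the vectorisation identity $\ket{v_\ell} = (\Kr_\ell\otimes\Ig)\ket{\Omega}$, namely $\Kr_\ell\ket{j} := (\Ig\otimes\bra{j})\ket{v_\ell}$. Then
\[
J(\Psi) = \sum_\ell (\Kr_\ell\otimes\Ig)\ketbra{\Omega}{\Omega}(\Kr_\ell^\dagger\otimes\Ig) = \sum_{i,j}\Big(\sum_\ell \Kr_\ell\ketbra{i}{j}\Kr_\ell^\dagger\Big)\otimes\ketbra{i}{j}.
\]
Comparing the $(i,j)$ blocks with the expression for $J(\Psi)$ above gives $\Psi(\ketbra{i}{j}) = \sum_\ell \Kr_\ell\ketbra{i}{j}\Kr_\ell^\dagger$. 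The matrix units $\ketbra{i}{j}$ span $\mathcal{L}(\Hh)$, so linearity yields $\Psi(\rho) = \sum_\ell \Kr_\ell\rho\Kr_\ell^\dagger$ for every $\rho$, with at most $N^2$ terms.

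It remains to derive the normalisation from trace preservation. For every $\rho$,
\[
\Tr[\rho] = \Tr[\Psi(\rho)] = \Tr\Big[\Big(\sum_\ell \Kr_\ell^\dagger\Kr_\ell\Big)\rho\Big].
\]
Taking $\rho = \ketbra{i}{j}$ over all $i,j$ forces $\sum_\ell \Kr_\ell^\dagger\Kr_\ell = \Ig$. This is the standard completeness relation. The statement as written has $\sum_\ell \Kr_\ell\Kr_\ell^\dagger = \Ig$, which is the unitality condition $\Psi(\Ig)=\Ig$ and does not follow from trace preservation alone. I would therefore prove $\sum_\ell \Kr_\ell^\dagger\Kr_\ell = \Ig$ and note the ordering, since the paper's version holds only for unital channels such as depolarising noise.
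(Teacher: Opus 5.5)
The paper gives no proof of this proposition; it simply attributes it to Kraus~\cite{kraus1971general}. Your Choi--Jamio\l{}kowski argument is a complete and correct finite-dimensional proof. The linear extension from $\Dd(\Hh)$ to all operators is already implicit in the paper's definition of complete positivity. Positivity of $J(\Psi)$ is exactly where complete positivity enters. Your vectorisation $\Kr_\ell\ket{j} := (\Ig\otimes\bra{j})\ket{v_\ell}$ does give $\ket{v_\ell} = (\Kr_\ell\otimes\Ig)\ket{\Omega}$. The bound $\dim(\Hh)^2$ on the number of Kraus operators then comes out as the rank of $J(\Psi)$.

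The usual route to the cited result goes through Stinespring's dilation theorem instead, which also handles (normal) completely positive maps in infinite dimensions. Your route is confined to finite dimensions, which is all the paper needs. In exchange it is more elementary and yields the cardinality bound for free.

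You are also right about the normalisation. For any Kraus family representing $\Psi$ one has $\sum_\ell \Kr_\ell\Kr_\ell^\dagger = \Psi(\Ig)$. So the relation as printed holds only for unital channels, and the statement is literally false otherwise. For amplitude damping, take $\Kr_0 = \ketbra{0}{0} + \sqrt{1-\gamma}\,\ketbra{1}{1}$ and $\Kr_1 = \sqrt{\gamma}\,\ketbra{0}{1}$. These give $\Kr_0\Kr_0^\dagger + \Kr_1\Kr_1^\dagger = (1+\gamma)\ketbra{0}{0} + (1-\gamma)\ketbra{1}{1} \neq \Ig$ for $\gamma>0$.

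The correct condition is the one you derive from trace preservation, $\sum_\ell \Kr_\ell^\dagger\Kr_\ell = \Ig$. It is also the one the paper actually uses afterwards, in the completeness relations giving $\sum_{k,\ell}p_{k,\ell}=1$. For the unital depolarising channel both orderings hold, so nothing downstream is affected.
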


In the case of a single Kraus operator~$\Kr_{1}$, 
the decomposition reads
$\Psi(\rho) = \Kr_{1}\rho\Kr_{1}^{\dagger}$, 
with $\Kr_{1}\Kr_{1}^{\dagger} = \Ig$.
In this case,
$\Kr_{1}$ is either unitary
or a scaled unitary of the form~$\alpha\Ug$, with $|\alpha|=1$ and~$\Ug$ unitary,
reducing this computation to the standard (noiseless) quantum evolution.

\subsection{Noisy quantum circuit}
We now consider a general quantum channel~$\Psi$, as per Definition~\ref{def:QChannel}, with Kraus representation
$\Psi(\rho) = \sum_{k} \Kr_k \rho \Kr_k^{\dagger}$.
The noisy version of~\eqref{eq:rho1_rho2} reads
\begin{equation}\label{eq:rhoNoisy}
\widetilde{\rho}_{2}
= \Psi_{\Ug}\Big( \Ug(\ttheta, \xx) \Psi_{\Vg}\left(\Vg\rho_0\Vg^\dagger\right) \Ug^{\dagger}(\ttheta, \xx) \Big)
= \Psi_{\Ug}\Big( \Ug(\ttheta, \xx) \Psi_{\Vg}(\rho_1) \Ug^{\dagger}(\ttheta, \xx) \Big)
\end{equation}
where $\Psi_{\Vg}$ and $\Psi_{\Ug}$ represent the noise channels respectively associated with~$\Vg$ and~$\Ug$ and~$\rho_1$ is the noiseless quantum state (as a density operator) after the first layer as in~\eqref{eq:rho1_rho2}.
Using $\ket{\psi_{1}} = \frac{1}{\sqrt{n}} \sum_{i=0}^{n-1} \ket{4i}$ from~\cite{gonon2023universal}, where~$n$ is such that
$\nf = \lceil \log_2(4n) \rceil$,
then
$$
\rho_1 = \ketbra{\psi_{1}}{\psi_{1}} = \frac{1}{n} \sum_{i=0}^{n-1} \sum_{j=0}^{n-1} \ketbra{4i}{4j}.
$$

The general noisy state is therefore
\begin{equation}\label{rho_noisy}
\widetilde{\rho}_{2} = \frac{1}{n} 
\sum_{k} \sum_{\ell} \sum_{i=0}^{n-1} \sum_{j=0}^{n-1} \Kr^{\Ug}_{k} \Ug(\ttheta, \xx) \Kr^{\Vg}_{\ell} \ketbra{4i}{4j} \Kr^{\Vg \dagger}_{\ell} \Ug^{\dagger}(\ttheta, \xx) \Kr^{\Ug \dagger}_{k}.
\end{equation}

\begin{remark}
The noisy density state $\widetilde{\rho}_{2}$ can be re-expressed as a wavefunction through a convex combination of pure states. 
For each pair of Kraus indices $(k, \ell)$, 
the quantum state
$\ket{\tilde{\phi}_{k,\ell}} := \Kr^{\Ug}_{k} \Ug(\ttheta, \xx) \Kr^{\Vg}_{\ell} \ket{\psi_{1}}$,
is not normalised, since
$\braket{\tilde{\phi}_{k,\ell}|\tilde{\phi}_{k,\ell}}$ is not necessarily one, as each Kraus operator is not unitary.
We define its normalised version
$\ket{\phi_{k,\ell}} := \frac{1}{\sqrt{p_{k,\ell}}} \ket{\tilde{\phi}_{k,\ell}}$,
with 
$p_{k,\ell} := \braket{\tilde{\phi}_{k,\ell}|\tilde{\phi}_{k,\ell}}$ and therefore
\begin{equation}\label{eq:kraus_ensemble}
\widetilde{\rho}_{2} = \sum_{k,\ell} \ketbra{\tilde{\phi}_{k,\ell}}{\tilde{\phi}_{k,\ell}} = \sum_{k,\ell} p_{k,\ell} \ketbra{\phi_{k,\ell}}{\phi_{k,\ell}}.
\end{equation}
The completeness relations $\sum_k \Kr^{\Ug\dagger}_k \Kr^{\Ug}_k = \sum_\ell \Kr^{\Vg\dagger}_\ell \Kr^{\Vg}_\ell = \Ig$ guarantee that $\sum_{k,\ell} p_{k,\ell} = 1$. 
\end{remark}

\subsection{Noisy probabilities}\label{sec:NoisyProba}

For measurement outcome $m \in \{0, 1, 2, 3\}$, we denote~$\Pi_m$ 
the projector onto the subspace $\{m, 4+m, \ldots, 4(n-1)+m\}$,
that is $\Pi_m= \sum_{k=0}^{n-1}\ketbra{4k+m}{4k+m}$.

\begin{proposition}\label{prop:kraus_trajectory_meas}
Under general Kraus noise, the quantum measurement probability reads
$$
\widetilde{\PP}_m = \Tr{\Bigl[\Pi_m\widetilde{\rho}_{2}\Bigr]}
 = \sum_{k,\ell} p_{k,\ell} \sum_{j=0}^{n-1} |\braket{4j+m | \phi_{k,\ell}}|^2,
\qquad\text{for each }m \in \{0, 1, 2, 3\}.
$$
\end{proposition}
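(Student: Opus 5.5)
The plan is to compute $\Tr[\Pi_m\widetilde{\rho}_{2}]$ directly, using linearity of the trace together with the ensemble decomposition~\eqref{eq:kraus_ensemble}. The first equality $\widetilde{\PP}_m = \Tr[\Pi_m\widetilde{\rho}_{2}]$ is Born's rule for the projective measurement $\{\Pi_m\}_{m=0}^{3}$ on the mixed state $\widetilde{\rho}_{2}$. This needs no proof beyond recalling that, in the density formalism, outcome probabilities of a projective measurement are given by this trace. The content of the statement is therefore the second equality.

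First, I will insert $\widetilde{\rho}_{2} = \sum_{k,\ell} p_{k,\ell}\ketbra{\phi_{k,\ell}}{\phi_{k,\ell}}$ from the preceding Remark. Because this sum is finite (at most $\dim(\Hh)^2$ terms for each channel), linearity of the trace gives
\[
\Tr[\Pi_m\widetilde{\rho}_{2}] = \sum_{k,\ell} p_{k,\ell}\,\Tr\bigl[\Pi_m\ketbra{\phi_{k,\ell}}{\phi_{k,\ell}}\bigr] = \sum_{k,\ell} p_{k,\ell}\,\braket{\phi_{k,\ell}|\Pi_m|\phi_{k,\ell}},
\]
using the cyclic identity $\Tr[A\ketbra{\phi}{\phi}] = \braket{\phi|A|\phi}$.

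Second, I will substitute $\Pi_m = \sum_{j=0}^{n-1}\ketbra{4j+m}{4j+m}$. Then
\[
\braket{\phi_{k,\ell}|\Pi_m|\phi_{k,\ell}} = \sum_{j=0}^{n-1}\braket{\phi_{k,\ell}|4j+m}\braket{4j+m|\phi_{k,\ell}} = \sum_{j=0}^{n-1}|\braket{4j+m|\phi_{k,\ell}}|^2,
\]
since $\braket{\phi|\psi} = \overline{\braket{\psi|\phi}}$.

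The only delicate point is Kraus pairs with $p_{k,\ell}=0$, for which $\ket{\phi_{k,\ell}}$ is undefined. I will handle this by adopting the convention that such terms are omitted, or equivalently by working throughout with the unnormalised vectors $\ket{\tilde{\phi}_{k,\ell}}$ and writing $p_{k,\ell}|\braket{4j+m|\phi_{k,\ell}}|^2 = |\braket{4j+m|\tilde{\phi}_{k,\ell}}|^2$, which also vanishes when $p_{k,\ell}=0$. With that convention the identity holds for all $m\in\{0,1,2,3\}$. As a consistency check, $\sum_m \Pi_m$ is the projector onto the span of $\{\ket{0},\ldots,\ket{4n-1}\}$, so the four probabilities sum to at most $\sum_{k,\ell}p_{k,\ell}=1$, with equality when $2^{\nf}=4n$.
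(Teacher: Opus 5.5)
Your proposal is correct and matches the paper's proof: Born's rule for the first equality, linearity of the trace over the ensemble decomposition~\eqref{eq:kraus_ensemble}, and then expanding $\Pi_m=\sum_{j}\ketbra{4j+m}{4j+m}$. The paper leaves the treatment of Kraus pairs with $p_{k,\ell}=0$ implicit; your convention of omitting them, or equivalently working with the unnormalised vectors $\ket{\tilde{\phi}_{k,\ell}}$, makes that step explicit.
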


\begin{proof}
The first equality follows from the Born rule applied to~\eqref{rho_noisy}.
Using~\eqref{eq:kraus_ensemble},
\begin{align*}
\Tr\Bigl[\Pi_m \widetilde{\rho}_{2}\Bigr] = \sum_{k,\ell} p_{k,\ell} \Tr\Bigl[\Pi_m \ket{\phi_{k,\ell}}\bra{\phi_{k,\ell}}\Bigr]  & = \sum_{k,\ell} p_{k,\ell} \braket{\phi_{k,\ell}|\Pi_m|\phi_{k,\ell}} \\
 & = \sum_{k,\ell} p_{k,\ell} \sum_{j=0}^{n-1} |\braket{4j+m | \phi_{k,\ell}}|^2.
\end{align*}
\end{proof}

\begin{remark}
Consider the Kraus factors of the form
$$
\Kr_{\ell}^{\Vg} = 
\sum_{\Gg\in\{\Ig,\Xg,\Yg,\Zg\}}
v_{\ell}^{\Gg}\Gg
\qquad\text{and}\qquad
\Kr_{k}^{\Ug} = 
\sum_{\Gg\in\{\Ig,\Xg,\Yg,\Zg\}}
u_{k}^{\Gg}\Gg,
$$
for some complex coefficients $(v_{\ell}^{\Gg})_{\ell}$, 
$(u_{k}^{\Gg})_{k}$.
We actually assume that the operation of~$\Vg$ is noiseless 
(which is not so demanding as~$\Vg$ only consists of Hadamard gates), so that in fact $\Kr_{\ell}^{\Vg} = \Ig$.
Recall that
$\ket{\phi_{k,\ell}} = \frac{1}{\sqrt{n p_{k,\ell}}}
\sum_{i=0}^{n-1}\Kr_{k}^{\Ug}\Ug(\ttheta,\xx)\Kr_{\ell}^{\Vg}\ket{4i}$.
We can then write, for each $m\in\{0,1,2,3\}$,
\begin{align*}
 &|\braket{4j+m | \phi_{k,\ell}}|^2\\
 &= 
 \frac{1}{n p_{k,\ell}}
\left|\bra{4j+m}\sum_{i=0}^{n-1}\Kr_{k}^{\Ug}\Ug(\ttheta,\xx)\Kr_{\ell}^{\Vg}\ket{4i}\right|^2
= 
 \frac{1}{n p_{k,\ell}}
\left|\bra{4j+m}\Kr_{k}^{\Ug}\Ug(\ttheta,\xx)\sum_{i=0}^{n-1}\ket{4i}\right|^2\\
 & = 
 \frac{1}{n p_{k,\ell}}
\left|\bra{4j+m}\Kr_{k}^{\Ug}\Ug(\ttheta,\xx)\sum_{i=0}^{n-1}\ket{4i}\right|^2
 = 
 \frac{1}{n p_{k,\ell}}
\left|\bra{4j+m}\Kr_{k}^{\Ug}
\sum_{o=0}^{3}\sum_{i=0}^{n-1}
\alpha_{o,i}\ket{4i+o}\right|^2\\
 & = 
 \frac{1}{n p_{k,\ell}}
\left|\bra{4j+m}\sum_{o=0}^{3}\sum_{i=0}^{n-1}
\widetilde{u}_{k,o,i}\ket{4i+o}\right|^2
 = 
 \frac{1}{n p_{k,\ell}}
\left|\sum_{o=0}^{3}\sum_{i=0}^{n-1}
\widetilde{u}_{k,o,i}\bra{4j+m}\ket{4i+o}\right|^2
 = 
\frac{\left|
\widetilde{u}_{k,m,j}\right|^2}{n p_{k,\ell}},
\end{align*}
where we used the form
of $\Ug(\ttheta,\xx)\sum_{i=0}^{n-1}
\ket{4i} = \sum_{o=0}^{3}\sum_{i=0}^{n-1}
\alpha_{o,i}\ket{4i+o}$ for some coefficients~$\alpha_{o,i}$,
derived in the proof of~\cite[Proposition~VI.1]{gonon2023universal}.
Regarding the actions of the $\Kr_{k}^{\Ug}$ operators above, 
$$
\Kr_{k}^{\Ug}\sum_{o=0}^{3}\sum_{i=0}^{n-1}
\alpha_{o,i}\ket{4i+o}
 = \sum_{\Gg\in\{\Ig,\Xg,\Yg,\Zg\}}
u_{k}^{\Gg}\Gg \left(\sum_{o=0}^{3}\sum_{i=0}^{n-1}
\alpha_{o,i}\ket{4i+o}\right)
 = \sum_{o=0}^{3}\sum_{i=0}^{n-1}
\widetilde{u}_{k,o,i}\ket{4i+o},
$$
where the coefficients $\widetilde{u}_{k,o,i}$ take into account $u_{k}^{\Gg}$ and $\alpha_{o,i}$.
The final quantum measurement probabilities therefore read,
for each $m\in\{0,1,2,3\}$,
$$
\widetilde{\PP}_m
 = \sum_{k} p_{k,\ell} \sum_{j=0}^{n-1} |\braket{4j+m | \phi_{k}}|^2
  = \frac{1}{n}\sum_{k} \sum_{j=0}^{n-1}\left|
\widetilde{u}_{k,m,j}\right|^2
$$
\end{remark}

\subsection{Noisy Quantitative Universal Approximation Theorem}

We now connect the noisy QNN output to the quantitative Universal Approximation Theorem from~\cite{gonon2023universal}.
We refer to Section~\ref{sec:theory} and~\eqref{eq:Fspaces} 
for the necessary tools.
Similarly to using $\,\widetilde{}\,$ for the noisy density operator~\eqref{eq:rhoNoisy}, we keep this notation for the noisy version of the QNN output.

\begin{theorem}[Universal approximation under general CPTP noise]\label{thm:general_noise_bound}
For any $R > 0$, $f \in \Ff_R$, $n \in \NN$, CPTP channels~$\Psi_{\Vg}$ and~$\Psi_{\Ug}$, there exists $\ttheta \in \Theta$ such that
\begin{equation}\label{eq:general_noise_error}
\left(\int_{\RR^d} \left|f(\xx) - \widetilde{f}^{R}_{n,\theta}(\xx)\right|^2 \mu(\D x)\right)^{\half} \leqslant \frac{L^1[\widehat{f}]}{\sqrt{n}} + 4R\sqrt{1 - \Ff_{\min}^2},
\end{equation}
with 
$\widetilde{f}^R_{n,\theta}(\cdot) = R\left[1 - 2\left(\widetilde{\PP}_{1} + \widetilde{\PP}_{2}\right)\right]
$ -- mimicking~\eqref{eq:f_QNN_NoNoise} -- and~$\Ff_{\min}$ is the worst-case fidelity
\begin{equation}
\Ff_{\min} \coloneqq \inf_{k,\ell: p_{k,\ell} > 0} \inf_{\ttheta \in \Theta, \xx \in \RR^d} \left|\Vg^\dagger\Ug(\ttheta,\xx)^\dagger
\braket{0^{\otimes\nf}|\phi_{k,\ell}}\right|.
\end{equation}

\end{theorem}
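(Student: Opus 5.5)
The natural route is to split the error with the triangle inequality in $L^2(\mu)$:
\[
\|f-\widetilde{f}^R_{n,\theta}\|_{L^2(\mu)} \leqslant \|f-\ftrn\|_{L^2(\mu)} + \|\ftrn-\widetilde{f}^R_{n,\theta}\|_{L^2(\mu)}.
\]
We choose $\ttheta$ to be the parameter given by \cite[Theorem~II.4]{gonon2023universal}, that is, Statement~\ref{st:state} with $\Bf=\LOnefhat$. The first term is then at most $\LOnefhat/\sqrt{n}$. The noise channels do not depend on $\ttheta$, so the same $\ttheta$ is used in the noisy circuit. Since $\mu$ is a probability measure, the second term is bounded by $\sup_{\xx}|\ftrn(\xx)-\widetilde{f}^R_{n,\theta}(\xx)|$. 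It therefore suffices to show, pointwise in $\xx$,
\[
|\ftrn-\widetilde{f}^R_{n,\theta}| = 2R\,\big|(\PP_1+\PP_2)-(\widetilde{\PP}_1+\widetilde{\PP}_2)\big| \leqslant 4R\sqrt{1-\Ff_{\min}^2}.
\]

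For the pointwise bound we work with the pure-state ensemble~\eqref{eq:kraus_ensemble}. Let $\ket{\psi_2}=\Ug\Vg\ket{0^{\otimes\nf}}$ be the noiseless output and $\Pi=\Pi_1+\Pi_2$. By Proposition~\ref{prop:kraus_trajectory_meas} and $\sum_{k,\ell}p_{k,\ell}=1$,
\[
(\widetilde{\PP}_1+\widetilde{\PP}_2)-(\PP_1+\PP_2)=\sum_{k,\ell}p_{k,\ell}\Big(\braket{\phi_{k,\ell}|\Pi|\phi_{k,\ell}}-\braket{\psi_2|\Pi|\psi_2}\Big).
\]
Each summand is bounded by the trace distance between the pure states $\ketbra{\phi_{k,\ell}}{\phi_{k,\ell}}$ and $\ketbra{\psi_2}{\psi_2}$, because $0\leqslant\Pi\leqslant\Ig$. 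For pure states this trace distance equals $\sqrt{1-|\braket{\psi_2|\phi_{k,\ell}}|^2}$. The overlap can be rewritten as $\braket{\psi_2|\phi_{k,\ell}}=\bra{0^{\otimes\nf}}\Vg^\dagger\Ug^\dagger\ket{\phi_{k,\ell}}$, which is exactly the quantity inside the infimum defining $\Ff_{\min}$. Hence every summand is at most $\sqrt{1-\Ff_{\min}^2}$, uniformly in $\ttheta$ and $\xx$. Averaging over $p_{k,\ell}$ gives $|\Delta\PP|\leqslant\sqrt{1-\Ff_{\min}^2}$.

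Multiplying by $2R$ already gives $2R\sqrt{1-\Ff_{\min}^2}$. If one instead bounds $\PP_1$ and $\PP_2$ separately, each with the trace distance, one gets a factor $2$, and the stated constant $4R$ then follows with room to spare. I would present the cruder per-outcome version, since it matches the constant in~\eqref{eq:general_noise_error}, and remark that a factor $2$ can be saved.

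The main obstacle is bookkeeping rather than substance. One has to read the fidelity in the definition of $\Ff_{\min}$ correctly as $|\braket{\psi_2|\phi_{k,\ell}}|$, and make sure the definition ranges only over trajectories with $p_{k,\ell}>0$, so that the normalised states $\ket{\phi_{k,\ell}}$ are well defined. One also has to check that the infimum over $\ttheta$ and $\xx$ makes the bound independent of the particular $\ttheta$ selected in the first step. The pure-state trace-distance identity $\tfrac12\|\ketbra{a}{a}-\ketbra{b}{b}\|_1=\sqrt{1-|\braket{a|b}|^2}$ is standard and can be cited. The general CPTP case then reduces, via the Kraus ensemble, to a convex combination of pure-state comparisons, so no further structure of $\Psi_{\Vg}$ or $\Psi_{\Ug}$ is needed.
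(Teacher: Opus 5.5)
Your proposal is correct and follows the paper's proof essentially step for step. You make the same triangle-inequality split with $\ttheta$ taken from \cite[Theorem~II.4]{gonon2023universal}, and the same Kraus-trajectory averaging via Proposition~\ref{prop:kraus_trajectory_meas}. You also use the same per-trajectory bound $\sqrt{1-\Ff_{k,\ell}^2}$: the paper cites Fuchs--van de Graaf, which for pure states is exactly your trace-distance identity. As in the paper, applying it separately to $m=1,2$ gives the constant $4R$. Your side remark is also right: $\Pi_1+\Pi_2$ is itself a projector, so bounding the sum directly gives $2R\sqrt{1-\Ff_{\min}^2}$, a factor $2$ sharper than the stated bound.
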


\begin{proof}
By the triangle inequality, we can decompose the error as
\begin{align}
\Bigl\|f - \widetilde{f}^R_{n,\theta}\Bigr\|_{L^2(\mu)} \leqslant \Bigl\|f - f^R_{n,\theta}\Bigr\|_{L^2(\mu)} + \Bigl\|f^R_{n,\theta} - \widetilde{f}^R_{n,\theta}\Bigr\|_{L^2(\mu)}.
\end{align}
For the first term, 
the quantitative quantum Universal Approximation Theorem~\cite[Theorem~II.4]{gonon2023universal} yields that there exists $\ttheta \in \Theta$ such that
$\|f - f^R_{n,\theta}\|_{L^2(\mu)} \leqslant \frac{L^1[\widehat{f}]}{\sqrt{n}}$.
For the noise-induced part, the output is
$\widetilde{f}^R_{n,\theta}(\cdot) = R[1 - 2(\widetilde{\PP}_{1} + \widetilde{\PP}_{2})]$,
so that Proposition~\ref{prop:kraus_trajectory_meas} gives
\begin{align}\label{intermediate_diff}
\left|f^R_{n,\theta}(\xx) - \widetilde{f}^R_{n,\theta}(\xx)\right|
& = 2R\Big|\Bigl(\PP_1 + \PP_2\Bigr) - \Bigl(\widetilde{\PP}_{1} + \widetilde{\PP}_{2}\Bigr)\Big|\\
 & = 2R\left|\sum_{m=1,2}\left[\Tr[\Pi_m \rho_2] - \sum_{k,\ell} p_{k,\ell} \ket{\phi_{k,\ell}}\Pi_m\bra{\phi_{k,\ell}}\right]\right|.
\end{align}
To bound~\eqref{intermediate_diff}, we average over the trajectories of the Kraus operators. 
For that purpose, let 
$\Ff_{k,\ell}(\ttheta, \xx) \coloneqq |\braket{\psi_{2}(\ttheta, \xx) | \phi_{k,\ell}}|$
denote the trajectory fidelity where $\ket{\psi_{2}(\ttheta, \xx)} = \Ug(\ttheta, \xx)\Vg\kz^{\otimes\nf}$ is the noiseless evolved state. 
Recall that the worst-case fidelity is
\begin{equation}
\Ff_{\min} \coloneqq \inf_{k,\ell: p_{k,\ell} > 0} \inf_{\ttheta \in \Theta, \xx \in \RR^d} \Ff_{k,\ell}(\ttheta, \xx).
\end{equation}
Secondly, for each trajectory, the deviation in measurement probability satisfies
\begin{equation}
\Big|\Tr\Big[\Pi_m \ketbra{\psi_{2}(\ttheta,\xx)}{\psi_{2}(\ttheta,\xx)}\Big] - \Tr\Big[\Pi_m \ket{\phi_{k,\ell}}\bra{\phi_{k,\ell}}\Big]\Big| \leqslant \sqrt{1 - \Ff_{k,\ell}^2}
\end{equation}
by the Fuchs-van de Graaf inequality ~\cite[Section~9.2.3]{nielsen2010quantum}.
Thus~\eqref{intermediate_diff} can be bounded as 
\begin{align}\label{bound_noise}
\Bigl\|f^R_{n,\theta} - \widetilde{f}^R_{n,\theta}\Bigr\|_{L^2(\mu)}
& \leqslant 2R \sum_{m=1,2} \left\|\sum_{k,\ell} p_{k,\ell} \left[\braket{\psi_{2}(\ttheta,\xx)|\Pi_m|\psi_{2}(\ttheta,\xx)} - \braket{\phi_{k,\ell}|\Pi_m|\phi_{k,\ell}}\right]\right\|_{L^2(\mu)}\nonumber\\
 & \leqslant 4R\sqrt{1 - \Ff_{\min}^2}.
\end{align}
\end{proof}

\begin{remark}
During training, the parameters $\ttheta$ are optimised with respect to noisy probabilities~$\widetilde{\PP}_m$, so the learnt function~$\widetilde{f}^{R}_{n,\theta}$ directly approximates the target~$f$, and not an intermediate ideal function. 
\end{remark}

\subsection{Depolarising noise channel}\label{sec:depolarisation}

Depolarising channels are common CPTP channels to model quantum noise~\cite[Section~8.3.4]{nielsen2010quantum}:

\begin{definition}
The $\df=2^{\nf}$-dimensional depolarising channel is the CPTP map 
$$
\Delta_{\lambda}(\rho) = (1-\lambda)\rho + \frac{\lambda}{\df}\Ig,
$$
with $\nf$ the number of qubits,
and $\lambda \in [0 , 1 + \frac{1}{\df^{2} - 1}]$ ensures complete positivity.
For an orthogonal basis of unitary operators $\{\Ug_i\}_{i=0}^{\df^2-1}$ with $\Ug_0 = \Ig$ and $\Tr[\Ug_i^\dagger \Ug_j] = \df\delta_{ij}$, the general Kraus decomposition reads
$K_0 =  \sqrt{1 - \lambda\frac{\df^2-1}{\df^2}} \Ig$
and 
$K_i = \sqrt{\frac{\lambda}{\df^2}} \Ug_i$ for $i \in \{ 1, \dots, \df^2-1\}$.
\end{definition}

In particular, in the $\nf=1$-qubit case, taking the Pauli  $\{\Ig, \Xg, \Yg, \Zg\}$, 
$$
\Kr_{0} = \sqrt{1-\frac{3\lambda}{4}}\Ig,\qquad
\Kr_{1} = \sqrt{\frac{\lambda}{2}}\Xg,\qquad
\Kr_{2} = \sqrt{\frac{\lambda}{2}}\Yg,\qquad
\Kr_{3} = \sqrt{\frac{\lambda}{2}}\Zg.
$$

\begin{proposition}\label{prop:rho_depol}
Under depolarising noise~$\Delta_{\lambda}$,
the noisy state~\eqref{eq:rhoNoisy} reads
    \begin{equation}
    \label{eq:Depolar_rho}
    \widetilde{\rho}_{2}^{\Delta}
        = (1-\lambda_{\Vg})(1-\lambda_{\Ug})\rho_2
        + \left[ (1-\lambda_{\Ug})\lambda_{\Vg} + \lambda_{\Ug} \right] \frac{\Ig}{2^{\nf}}.
    \end{equation}
\end{proposition}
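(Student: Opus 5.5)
The plan is a direct substitution into \eqref{eq:rhoNoisy}, taking $\Psi_{\Vg}=\Delta_{\lambda_{\Vg}}$ and $\Psi_{\Ug}=\Delta_{\lambda_{\Ug}}$, both acting on the full $\df=2^{\nf}$-dimensional space. The computation needs only two facts.

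\textbf{Key step.} For any unitary $\Ug$ and any density matrix $\rho$, conjugation preserves the maximally mixed state, $\Ug\,\frac{\Ig}{\df}\,\Ug^{\dagger}=\frac{\Ig}{\df}$. Moreover, because $\Tr[\rho]=1$, the map $\Delta_\lambda$ is affine on density matrices, and it fixes the maximally mixed state:
$$
\Delta_\lambda\Bigl(\frac{\Ig}{\df}\Bigr)=(1-\lambda)\frac{\Ig}{\df}+\lambda\frac{\Ig}{\df}=\frac{\Ig}{\df}.
$$

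First, I will write $\Psi_{\Vg}(\rho_1)=(1-\lambda_{\Vg})\rho_1+\lambda_{\Vg}\frac{\Ig}{\df}$. Conjugating by $\Ug(\ttheta,\xx)$ and using \eqref{eq:rho1_rho2} together with the key step gives
$$
(1-\lambda_{\Vg})\rho_2+\lambda_{\Vg}\frac{\Ig}{\df}.
$$

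Next, I will apply $\Delta_{\lambda_{\Ug}}$ to this state, using linearity of the channel. The result is
$$
(1-\lambda_{\Ug})(1-\lambda_{\Vg})\rho_2+(1-\lambda_{\Ug})\lambda_{\Vg}\frac{\Ig}{\df}+\lambda_{\Ug}\Tr\Bigl[(1-\lambda_{\Vg})\rho_2+\lambda_{\Vg}\tfrac{\Ig}{\df}\Bigr]\frac{\Ig}{\df}.
$$
The trace in the last term equals $1$. Collecting the coefficients of $\frac{\Ig}{2^{\nf}}$ then gives exactly \eqref{eq:Depolar_rho}.

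As a check, I could redo the computation with the Kraus form in the definition. The identity needed is $\sum_{i=0}^{\df^2-1}\Ug_i\rho\Ug_i^\dagger=\df\Tr[\rho]\Ig$, which follows from the unitary basis being orthogonal. However, the affine formulation already suffices. The Kraus-form check would only confirm consistency with \eqref{rho_noisy}.

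There is no real obstacle. The only point needing care is that the channel must be applied as the map on the whole $\nf$-qubit register, with identity component $\Ig/2^{\nf}$, and not as a single-qubit channel. I will also need to verify that the trace-one property is preserved at each stage, so that the $\lambda\frac{\Ig}{\df}$ term appears with coefficient exactly one.
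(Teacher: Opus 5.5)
Your proposal is correct and matches the paper's proof: substitute $\Delta_{\lambda_{\Vg}}$ and $\Delta_{\lambda_{\Ug}}$ into \eqref{eq:rhoNoisy}, use unitarity of $\Ug$ to keep $\Ig/2^{\nf}$ invariant under conjugation, then expand the outer channel on the resulting trace-one state and collect coefficients. Your explicit observation that the trace factor equals one, so the linear and affine forms of $\Delta_\lambda$ agree, is a detail the paper leaves implicit.
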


\begin{proof}
Clearly, from~\cite{gonon2023universal}, both $\Vg$ and $\Ug$ are unitary, so that  $\Ug\Ug^{\dagger} = \Vg\Vg^{\dagger} = \Ig$.
Applying depolarising noise after each gate,  we obtain
\begin{align*}
\widetilde{\rho}_{2}^{\Delta}
= \Delta_{\lambda_{\Ug}}\left( \Ug(\ttheta, \xx) \Delta_{\lambda_{\Vg}}(\rho_1) \Ug^{\dagger}(\ttheta, \xx) \right) 
& = \Delta_{\lambda_{\Ug}}\left( (1-\lambda_{\Vg})\rho_2 + \frac{\lambda_{\Vg}}{2^{\nf}}\Ig \right) \\
& = (1-\lambda_{\Vg})(1-\lambda_{\Ug})\rho_2
+ \left[ (1-\lambda_{\Ug})\lambda_{\Vg} + \lambda_{\Ug} \right] \frac{\Ig}{2^{\nf}}.
    \end{align*}
\end{proof}

The proposition provides us with an explanation of how noise affects the original state.
This then allows us to precisely quantify the error induced by the depolarisation.
A common metric in quantum computing to evaluate this is the following~\cite[Section~9.2.2]{nielsen2010quantum}:

\begin{definition}
Given two density operators~$\rho$ and~$\sigma$, the \emph{fidelity} is
$\Ff[\rho,\sigma] \coloneqq \Tr\left[\left(\rho^{\half}\sigma\rho^{\half}\right)^{\half}\right]$.
\end{definition}

In particular, given 
a pure quantum state~$\ket{\psi}$ and some density operator~$\sigma$,
we can write
$$
\Ff[\ket{\psi}\bra{\psi},\sigma]
        = \Tr\left[\left(\sqrt{\ket{\psi}\bra{\psi}}\sigma\sqrt{\ket{\psi}\bra{\psi}}\right)^{\half}\right]
        = \Tr\left[\left(\ket{\psi}\bra{\psi}\sigma\right)^{\half}\right]
        = \Tr\left[\sqrt{\braket{\psi|\sigma|\psi}}\right]
         = \sqrt{\braket{\psi|\sigma|\psi}}.
$$
Consider~\eqref{eq:Depolar_rho} and the noiseless $\rho_2 = \Ug(\ttheta, \xx) \Vg\rho_{0}\Vg^{\dagger}\Ug^{\dagger}(\ttheta, \xx)$ in~\eqref{eq:rho1_rho2},
also illustrated in Figure~\ref{fig:Fidelity_Qubits}.

\begin{proposition}\label{prop:fidelity}
Let $\rho_{0} = \ket{0}\bra{0}$ (so that $\rho_2$ is the density operator corresponding to the pure state $\ket{\psi_{2}} = \Ug\Vg\ket{0}$) 
(we drop the $(\ttheta,\xx)$ notation for simplicity).
Then, the fidelity for $\widetilde{\rho}_{2}^{\Delta}$ is 
\begin{equation}
\Ff\left[\rho_{2},\widetilde{\rho}_{2}^{\Delta}\right] = \left((1-\lambda_{\Vg})(1-\lambda_{\Ug})
+ 
\frac{\left[ (1-\lambda_{\Ug})\lambda_{\Vg} + \lambda_{\Ug} \right]}{2^{\nf}}\right)^{\half}.
    \end{equation}
\end{proposition}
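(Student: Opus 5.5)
The plan is to reduce the fidelity to a single scalar computation, using the pure-state identity derived just before the statement. Since $\rho_2 = \ketbra{\psi_2}{\psi_2}$ with $\ket{\psi_2} = \Ug\Vg\ket{0}$ pure, that identity gives
\[
\Ff\left[\rho_2,\widetilde{\rho}_2^{\Delta}\right] = \sqrt{\braket{\psi_2|\widetilde{\rho}_2^{\Delta}|\psi_2}}.
\]
So it suffices to evaluate the quadratic form $\braket{\psi_2|\widetilde{\rho}_2^{\Delta}|\psi_2}$. This rests on the fact that $\sqrt{\ketbra{\psi}{\psi}} = \ketbra{\psi}{\psi}$ for a rank-one projector, and that $\ketbra{\psi}{\psi}\sigma\ketbra{\psi}{\psi} = \braket{\psi|\sigma|\psi}\,\ketbra{\psi}{\psi}$. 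The square root of this operator is then $\sqrt{\braket{\psi|\sigma|\psi}}\,\ketbra{\psi}{\psi}$, whose trace is $\sqrt{\braket{\psi|\sigma|\psi}}$. I would spell this out briefly, because the displayed chain in the text compresses it.

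Next I would substitute the explicit form of $\widetilde{\rho}_2^{\Delta}$ from Proposition~\ref{prop:rho_depol}, namely an affine combination of $\rho_2$ and $\Ig/2^{\nf}$. By linearity,
\[
\braket{\psi_2|\widetilde{\rho}_2^{\Delta}|\psi_2} = (1-\lambda_{\Vg})(1-\lambda_{\Ug})\,\braket{\psi_2|\rho_2|\psi_2} + \left[(1-\lambda_{\Ug})\lambda_{\Vg}+\lambda_{\Ug}\right]\frac{\braket{\psi_2|\psi_2}}{2^{\nf}}.
\]
Unitarity of $\Ug$ and $\Vg$ gives $\braket{\psi_2|\psi_2} = \braket{0|0} = 1$, and hence $\braket{\psi_2|\rho_2|\psi_2} = |\braket{\psi_2|\psi_2}|^2 = 1$. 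Taking the square root then yields the claimed formula.

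There is no real obstacle here. The only points needing care are checking that the coefficient combination is nonnegative, so that the square root is real, and justifying the pure-state fidelity identity. Nonnegativity follows because $\widetilde{\rho}_2^{\Delta}$ is a density operator, being the image of a state under a CPTP map. For the permitted range $\lambda>1$ one can still rely on positivity of $\widetilde{\rho}_2^{\Delta}$ rather than on the signs of the individual coefficients. Independence of $(\ttheta,\xx)$ is automatic, since only the norm of $\ket{\psi_2}$ enters the computation.
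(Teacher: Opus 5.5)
Your proposal is correct and follows essentially the same route as the paper's proof. You reduce to $\Ff^2 = \braket{\psi_2|\widetilde{\rho}_2^{\Delta}|\psi_2}$ via the pure-state identity, substitute Proposition~\ref{prop:rho_depol} by linearity, and use unitarity to evaluate both inner products as $1$. Your justification of the pure-state identity through $\ketbra{\psi}{\psi}\sigma\ketbra{\psi}{\psi}=\braket{\psi|\sigma|\psi}\ketbra{\psi}{\psi}$ and your positivity argument for the radicand are both sound, and they tidy up the paper's somewhat compressed display.
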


\begin{proof}
    \begin{align*}
\Ff\left[\rho_{2},\widetilde{\rho}_{2}^{\Delta}\right]^2
  = \braket{\psi_{2}|\widetilde{\rho}_{2}^{\Delta}|\psi_{2}}
  & = 
\bra{0}\Vg^{\dagger}\Ug^{\dagger}\widetilde{\rho}_{2}^{\Delta}\Ug \Vg\ket{0}\\
  & = 
\bra{0}\Vg^{\dagger}\Ug^{\dagger}\left((1-\lambda_{\Vg})(1-\lambda_{\Ug})\rho_2
    + \left[ (1-\lambda_{\Ug})\lambda_{\Vg} + \lambda_{\Ug} \right] \frac{\Ig}{2^{\nf}}\right)\Ug \Vg\ket{0}\\
  & = 
(1-\lambda_{\Vg})(1-\lambda_{\Ug})\bra{0}\Vg^{\dagger}\Ug^{\dagger}
\rho_2
\Ug \Vg\ket{0}
 + 
\frac{\left[ (1-\lambda_{\Ug})\lambda_{\Vg} + \lambda_{\Ug} \right]}{2^{\nf}
}
\bra{0}\Vg^{\dagger}\Ug^{\dagger}
\Ug \Vg\ket{0}\\
  & = 
(1-\lambda_{\Vg})(1-\lambda_{\Ug})
+ 
\frac{\left[ (1-\lambda_{\Ug})\lambda_{\Vg} + \lambda_{\Ug} \right]}{2^{\nf}}
.
\end{align*}
\end{proof}

\begin{figure}[H]
    \centering
    \begin{subfigure}[b]{0.45\textwidth}
        \centering
        \includegraphics[scale=0.4]{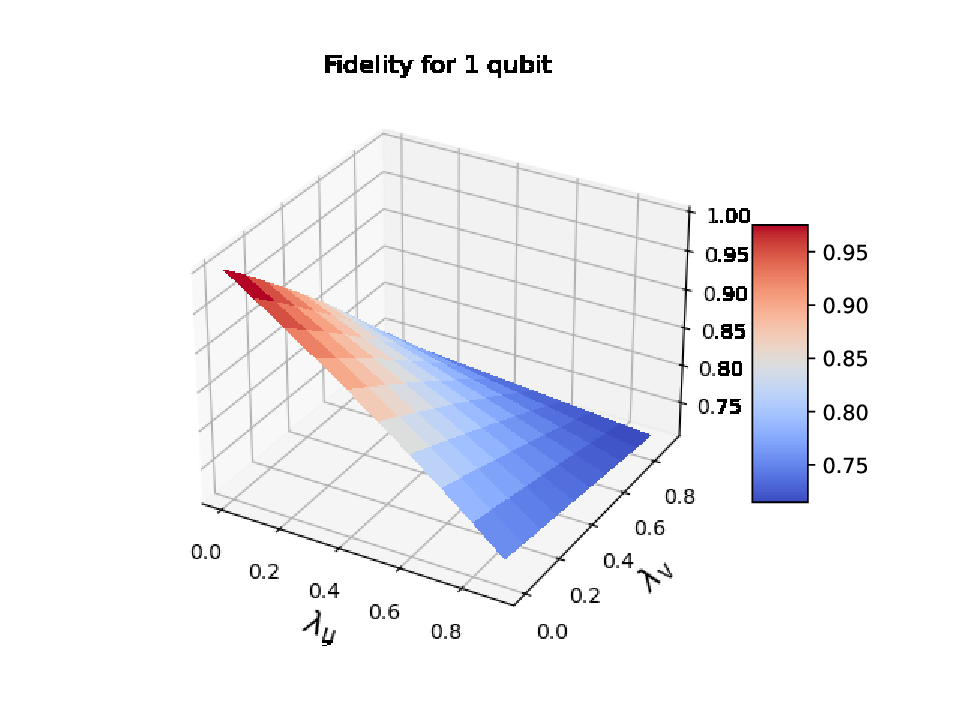}
    \end{subfigure}
    \hfill 
    \begin{subfigure}[b]{0.45\textwidth}
        \centering
        \includegraphics[scale=0.4]{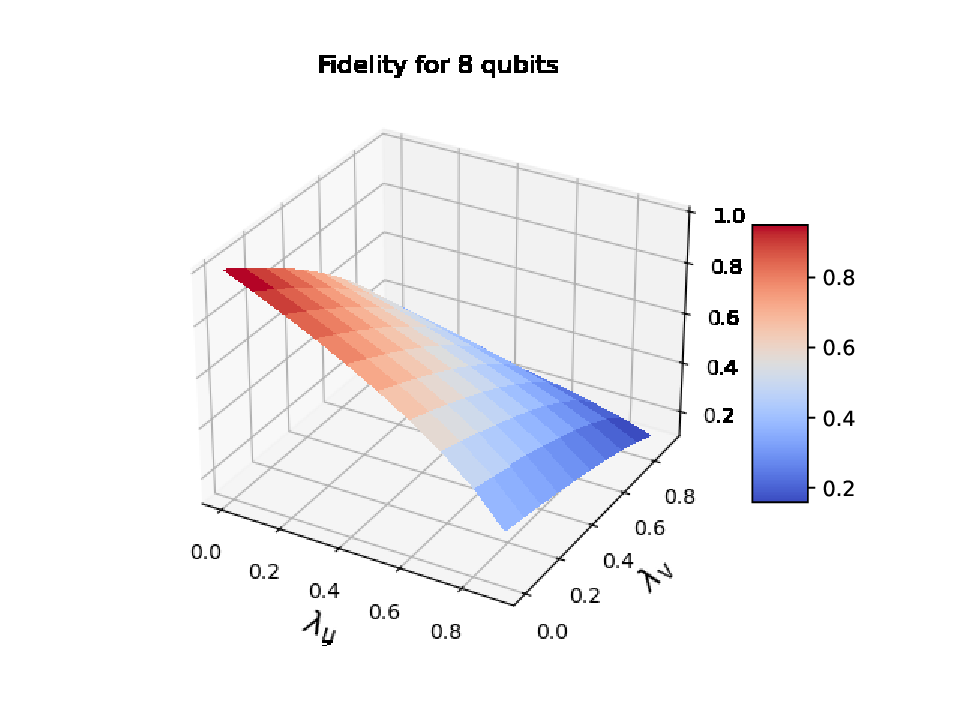}
    \end{subfigure}
\caption{Fidelity computed in Proposition~\ref{prop:fidelity} with $\nf \in \{1, 8\}$ qubits.}    \label{fig:Fidelity_Qubits}
\end{figure}

\begin{proposition}
The probability of outcome $m \in \{0, 1, 2, 3\}$ reads
\begin{equation}\label{eq:noisy_prob}
\widetilde{\PP}_{m} = \alpha \PP_{m} + (1-\alpha) \frac{n}{2^{\nf}},
\qquad\text{where }
\alpha := (1-\lambda_{\Vg})(1-\lambda_{\Ug}),
\end{equation}
where~$\PP_{m}$ is the noiseless probability
and where we recall that
$\nf=\lceil \log_2(4n + n_{0}) \rceil$.
\end{proposition}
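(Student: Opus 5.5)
The plan is to start from Proposition~\ref{prop:rho_depol}, which already gives the noisy state as an affine combination of the noiseless state and the maximally mixed state:
\[
\widetilde{\rho}_{2}^{\Delta} = \alpha\rho_2 + \bigl[(1-\lambda_{\Ug})\lambda_{\Vg} + \lambda_{\Ug}\bigr]\frac{\Ig}{2^{\nf}} .
\]
The first step is to identify the weight on the identity term with $1-\alpha$. Expanding gives
\[
1-(1-\lambda_{\Vg})(1-\lambda_{\Ug}) = \lambda_{\Vg}+\lambda_{\Ug}-\lambda_{\Vg}\lambda_{\Ug} = (1-\lambda_{\Ug})\lambda_{\Vg}+\lambda_{\Ug},
\]
so that
\[
\widetilde{\rho}_{2}^{\Delta} = \alpha\rho_2 + (1-\alpha)\frac{\Ig}{2^{\nf}} .
\]
This is consistent with trace preservation, since both $\rho_2$ and $\Ig/2^{\nf}$ have unit trace.

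Next I would apply the Born rule exactly as in Proposition~\ref{prop:kraus_trajectory_meas}, namely $\widetilde{\PP}_m = \Tr[\Pi_m\widetilde{\rho}_{2}^{\Delta}]$, and use linearity of the trace. The $\rho_2$ term gives $\alpha\Tr[\Pi_m\rho_2] = \alpha\PP_m$, since the noiseless probability is by definition $\Tr[\Pi_m\rho_2]$. The identity term gives
\[
\frac{1-\alpha}{2^{\nf}}\Tr[\Pi_m] = (1-\alpha)\frac{\operatorname{rank}(\Pi_m)}{2^{\nf}} .
\]

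The only point needing care is $\operatorname{rank}(\Pi_m)$. By the definition in Section~\ref{sec:NoisyProba}, $\Pi_m=\sum_{k=0}^{n-1}\ketbra{4k+m}{4k+m}$ is a sum of $n$ orthogonal rank-one projectors. These are distinct computational basis states, all lying inside the $2^{\nf}$-dimensional register because $4(n-1)+m\leqslant 4n-1 < 2^{\nf}$ whenever $\nf\geqslant\log_2(4n+n_0)$. Hence $\Tr[\Pi_m]=n$, which yields
\[
\widetilde{\PP}_m = \alpha\PP_m + (1-\alpha)\frac{n}{2^{\nf}} .
\]
I would add a remark that the basis states of index $\geqslant 4n$, which appear when $2^{\nf}>4n$, are not covered by any $\Pi_m$. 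They absorb the leftover mass $1-4n/2^{\nf}$ of the mixed part, which is why the offset term $1-4n/2^{\nf}$ appears in the main bound.

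I expect no real obstacle; the main point of care is the counting of $\Tr[\Pi_m]$ relative to $2^{\nf}$, with or without the ancillary offset $n_0$.
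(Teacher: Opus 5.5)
Your proof is correct and follows the same route as the paper: apply the Born rule $\widetilde{\PP}_m=\Tr[\Pi_m\widetilde{\rho}_{2}^{\Delta}]$ to the decomposition of Proposition~\ref{prop:rho_depol}, use linearity of the trace, and compute $\Tr[\Pi_m]=n$. You also make explicit two steps the paper leaves implicit: the identity weight $(1-\lambda_{\Ug})\lambda_{\Vg}+\lambda_{\Ug}$ equals $1-\alpha$, and the indices $4k+m\leqslant 4n-1$ lie inside the $2^{\nf}$-dimensional register.
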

\begin{proof}
From Proposition~\ref{prop:kraus_trajectory_meas} and Proposition~\ref{prop:rho_depol}, we can write, for each $m\in\{0,1,2,3\}$,
\begin{align*}
\widetilde{\PP}_m = \Tr\Bigl[\Pi_m\widetilde{\rho}_{2}\Bigr]
 & = 
\Tr\Big[\Pi_m
\Big((1-\lambda_{\Vg})(1-\lambda_{\Ug})\rho_2
+ \left[ (1-\lambda_{\Ug})\lambda_{\Vg} + \lambda_{\Ug} \right] \frac{\Ig}{2^{\nf}}
\Big)\Big]\\
 & = 
(1-\lambda_{\Vg})(1-\lambda_{\Ug})\Tr[\Pi_m\rho_2]
+ \left[ (1-\lambda_{\Ug})\lambda_{\Vg} + \lambda_{\Ug} \right] \Tr\left[\Pi_m\frac{\Ig}{2^{\nf}}
\right]\\
 & = 
(1-\lambda_{\Vg})(1-\lambda_{\Ug})\PP_m
+ \frac{1}{2^{\nf}}
\left[ (1-\lambda_{\Ug})\lambda_{\Vg} + \lambda_{\Ug} \right] \Tr[\Pi_m].
\end{align*}
The proposition then follows 
if $\Tr[\Pi_m]=n$.
Let~$\mathcal{K}_m$ be the target set of measurement outcomes defined as $\mathcal{K}_m := \{m, 4+m, \dots, 4(n-1)+m\}$. 
The projection operator~$\Pi_m$ onto the subset of orthonormal computational basis states defined by $\mathcal{K}_m$ can be written explicitly as
\begin{equation}
\Pi_m = \sum_{k \in \mathcal{K}_m} \ket{k}\bra{k} = \sum_{i=0}^{n-1} \ket{4i+m}\bra{4i+m}.
\end{equation}
The trace of an operator is independent of the basis chosen. In the computational basis of the Hilbert space, denoted
$\{\ket{j}\}_{j=0}^{N-1}$, it reads
\begin{align*}
\Tr[\Pi_m]
 & = \sum_{j=0}^{N-1} \bra{j}\Pi_m\ket{j}
 = \sum_{j=0}^{N-1} \bra{j} \left( \sum_{i=0}^{n-1} \ket{4i+m}\bra{4i+m} \right) \ket{j}\\
&  = \sum_{i=0}^{n-1} \sum_{j=0}^{N-1} \braket{j | 4i+m}\braket{4i+m | j}
 = \sum_{i=0}^{n-1} \sum_{j=0}^{N-1} |\braket{j | 4i+m}|^2
  = n.
\end{align*}
\end{proof}

In this depolarising setting, 
the output of the QNN, introduced in Theorem~\ref{thm:general_noise_bound}, is explicit:
\begin{corollary}\label{cor:depolar_qnn_output}
With $f^{R}_{n,\theta}(\cdot)$ the ideal QNN output from~\eqref{eq:f_QNN_NoNoise},
we obtain, under depolarising noise,
\begin{equation}\label{eq:depolar_qnn_explicit}
\widetilde{f}^{R}_{n,\theta}(\xx)
= \alpha f^{R}_{n,\theta}(\xx)
+ R(1 - \alpha)\left(1 - \frac{4n}{2^{\nf}}
\right).
\end{equation}
\end{corollary}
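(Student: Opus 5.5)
The plan is to substitute the noisy probabilities from~\eqref{eq:noisy_prob} directly into the definition of the noisy output $\widetilde{f}^{R}_{n,\theta}(\xx) = R\bigl[1 - 2(\widetilde{\PP}_{1} + \widetilde{\PP}_{2})\bigr]$ from Theorem~\ref{thm:general_noise_bound}, and then rearrange using the ideal output~\eqref{eq:f_QNN_NoNoise}. This is a purely algebraic consequence of the preceding proposition, so no analytic estimates are needed.

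\emph{First step.} Using~\eqref{eq:noisy_prob} for $m=1$ and $m=2$,
\begin{equation*}
\widetilde{\PP}_{1} + \widetilde{\PP}_{2} = \alpha\bigl(\PP_{1} + \PP_{2}\bigr) + 2(1-\alpha)\frac{n}{2^{\nf}},
\end{equation*}
so that
\begin{equation*}
\widetilde{f}^{R}_{n,\theta}(\xx) = R - 2R\alpha\bigl(\PP_{1}+\PP_{2}\bigr) - R(1-\alpha)\frac{4n}{2^{\nf}}.
\end{equation*}

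\emph{Second step.} Split the constant term as $R = \alpha R + (1-\alpha)R$. The $\alpha$-part then combines with $-2R\alpha(\PP_1+\PP_2)$ to give $\alpha R\{1 - 2(\PP_1+\PP_2)\} = \alpha f^{R}_{n,\theta}(\xx)$. The $(1-\alpha)$-part combines with the last term to give $R(1-\alpha)\bigl(1 - \frac{4n}{2^{\nf}}\bigr)$, which is exactly~\eqref{eq:depolar_qnn_explicit}.

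The only point needing care is bookkeeping: $\Tr[\Pi_m] = n$ for each $m$, which gives the factor $2\cdot 2 n = 4n$ after multiplying by $2R$. We should also check that the same $\ttheta$ and the same $\PP_m$ (the noiseless probabilities of~\cite{gonon2023universal}) appear on both sides, which holds because Proposition~\ref{prop:rho_depol} leaves the unitary evolution $\rho_2$ unchanged. No genuine obstacle is expected; the step most likely to cause errors is the constant offset, where the $n_0$ padding in $\nf$ means $4n/2^{\nf}$ need not equal $1$. For that reason the offset term should be kept rather than simplified away.
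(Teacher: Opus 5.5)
Your proposal is correct and follows the paper's proof. Like the paper, you substitute \eqref{eq:noisy_prob} for $m=1,2$ into $\widetilde{f}^{R}_{n,\theta}(\xx) = R\bigl[1-2(\widetilde{\PP}_{1}+\widetilde{\PP}_{2})\bigr]$ and regroup. The only difference is cosmetic: the paper rewrites $\PP_1+\PP_2 = \frac{R - f^{R}_{n,\theta}(\xx)}{2R}$ instead of splitting $R = \alpha R + (1-\alpha)R$, which is the same algebra.
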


\begin{remark}\label{rem:special}
A useful simplification occurs when $n_{0}=0$, namely when choosing~$n$ and~$\nf$ as
$2^{\nf} = 4n$. 
In this case, Corollary~\ref{cor:depolar_qnn_output} reduces to 
$\widetilde{f}^{R}_{n,\theta}(\xx)
= \alpha f^{R}_{n,\theta}(\xx)$,
with~$\alpha$ in~\eqref{eq:noisy_prob}.
\end{remark}

\begin{proof}
Using the definition of $f^{R}_{n,\theta}(\cdot)$ in~\eqref{eq:f_QNN_NoNoise}, we have
\begin{align*}
\widetilde{f}^{R}_{n,\theta}(\xx)
 = R - 2R\left(\widetilde{\PP}_{1} + \widetilde{\PP}_{2}\right)
& = R - 2R\left[\alpha(\PP_1 + \PP_2) + 2(1 - \alpha)\frac{n}{2^{\nf}}
\right]\\
& = R - 2R\alpha(\PP_1 + \PP_2) - 4R(1-\alpha)\frac{n}{2^{\nf}}
\\
&  = R - 2R\alpha \frac{R - f^{R}_{n,\theta}(\xx)}{2R} - 4R(1 - \alpha)\frac{n}{2^{\nf}}.
\end{align*}
\end{proof}

The next theorem--arguably the most immediate result here for practitioners--shows that a post-processing layer with only two trainable parameters exactly cancels the depolarising bias, so that noisy training can recover the noiseless QNN output without any hardware-level mitigation.

\begin{theorem}[Exact depolarising-noise cancellation by affine post-processing]\label{cor:noise_correction} Consider the QNN $\widetilde{f}^{R}_{n,\bar{\theta}}(\cdot)$ built from the noisy QNN $\widetilde{f}^{R}_{n,\theta}(\cdot)$ by adding two trainable parameters $\beta_1,\beta_2 \in \RR$:
\begin{equation}
    \widetilde{f}^{R}_{n,\bar{\theta}}(\xx)  = \beta_1\, \widetilde{f}^{R}_{n,\theta}(\xx) + \beta_2, \qquad \bar{\theta} = (\theta,\beta_1,\beta_2).
\end{equation}
Then the choice $\beta_1 = \frac{1}{\alpha}$ and  $\beta_2 = -\beta_1 R(1-\alpha)\bigl(1 - \tfrac{4n}{2^{\nf}}\bigr)$ yields $\widetilde{f}^{R}_{n,\bar{\theta}}(\xx) = f^{R}_{n,\theta}(\xx)$ pointwise. Equivalently, optimal training of the augmented QNN exactly recovers the ideal QNN output. In the case of Remark~\ref{rem:special} ($2^{\nf}=4n$), $\beta_2 = 0$ and only $\beta_1$ needs to be trained.
\end{theorem}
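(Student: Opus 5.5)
The plan is to use Corollary~\ref{cor:depolar_qnn_output}, which writes the noisy output as an exact affine function of the ideal one:
\[
\widetilde{f}^{R}_{n,\theta}(\xx) = \alpha f^{R}_{n,\theta}(\xx) + c, \qquad c := R(1-\alpha)\Bigl(1-\tfrac{4n}{2^{\nf}}\Bigr).
\]
Here $\alpha=(1-\lambda_{\Vg})(1-\lambda_{\Ug})$ does not depend on $\xx$ or $\ttheta$. The offset $c$ depends only on $R$, $n$, $\nf$ and the noise parameters. So the noisy map is a single fixed affine transformation applied pointwise to the ideal output. The claim is then just that this transformation can be inverted by the post-processing layer.

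\textbf{Step 1.} Check that $\alpha\neq 0$, so that $\beta_1=1/\alpha$ is well defined. This requires $\lambda_{\Vg}\neq 1$ and $\lambda_{\Ug}\neq 1$. This is the one genuine point of care, because the definition of $\Delta_\lambda$ allows $\lambda$ up to $1+\frac{1}{\df^2-1}$. The paper works in the regime $\alpha\in(0,1]$, as stated in the main result at a glance. I will state this as a standing assumption: if $\alpha=0$ the output is constant and no post-processing can recover $f^{R}_{n,\theta}$. The argument also works for negative $\alpha$, provided $\alpha\neq0$.

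\textbf{Step 2.} Substitute into the definition of the augmented QNN:
\[
\beta_1\widetilde{f}^{R}_{n,\theta}(\xx)+\beta_2 = \beta_1\alpha f^{R}_{n,\theta}(\xx) + \beta_1 c + \beta_2.
\]
With $\beta_1=1/\alpha$ the leading coefficient becomes $1$. With $\beta_2=-\beta_1 c$ the constant terms cancel. This is exactly the choice stated in the theorem, and it gives $\widetilde{f}^{R}_{n,\bar\theta}(\xx)=f^{R}_{n,\theta}(\xx)$ for every $\xx$ and every $\ttheta$.

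\textbf{Step 3.} Deduce the training statement. For any loss, for instance the $L^2(\mu)$ loss of Statement~\ref{st:state}, the set of functions realisable by the augmented noisy model contains every ideal output $f^{R}_{n,\theta}$. Hence the optimal augmented loss is at most the optimal ideal loss. In particular, combining with \cite[Theorem~II.4]{gonon2023universal}, there is some $\bar\theta$ achieving error at most $L^1[\widehat f]/\sqrt n$.

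\textbf{Step 4.} Handle the special case of Remark~\ref{rem:special}. When $2^{\nf}=4n$, the factor $1-4n/2^{\nf}$ vanishes, so $c=0$ and hence $\beta_2=0$. Only $\beta_1$ then needs to be trained.

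The only possible obstacle is the non-degeneracy $\alpha\neq0$ in Step 1. Everything else is direct algebra on Corollary~\ref{cor:depolar_qnn_output}.
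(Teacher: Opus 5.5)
Your proposal is correct and is the same argument as the paper's, whose entire proof is your Step~2: substitute the stated $(\beta_1,\beta_2)$ into the affine form and apply Corollary~\ref{cor:depolar_qnn_output} so that the slope becomes $1$ and the offset cancels. Your extra points are sound refinements that the paper leaves implicit. The first is the requirement $\alpha\neq 0$, which the paper assumes tacitly by taking $\alpha\in(0,1]$. The second is reading the training claim as an inclusion of model classes; this is the accurate version, since the augmented class is in fact the larger set of all $a f^{R}_{n,\theta}+b$ with $a,b\in\mathbb{R}$.
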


\begin{remark}[Practical implication]\label{rem:practical} Theorem~\ref{cor:noise_correction} provides a concrete prescription for noise-aware training that requires no modification to the underlying quantum circuit and adds at most two classical parameters. Because the hardware parameters determine $\alpha$ $(\eps_{1\Qg}, \eps_{2\Qg}, T_1, T_2)$ via the explicit formulae of Section~\ref{subsec:hardware_params}, the correction can be applied \emph{a priori} using live calibration data, or jointly learned with $\theta$ at negligible additional cost.
\end{remark}

\begin{proof}
Substituting the specified $(\beta_1,\beta_2)$ into the affine form and
applying Corollary~\ref{cor:depolar_qnn_output}:
\begin{equation}
    \widetilde{f}^{R}_{n,\bar{\theta}}(\xx)  = \beta_1 \widetilde{f}^{R}_{n,\theta}(\xx) + \beta_2 = \beta_1 \alpha f^{R}_{n,\theta}(\xx) + \beta_1 R(1-\alpha)\left(1-\frac{4n}{2^{\nf}}\right) + \beta_2 = f^{R}_{n,\theta}(\xx).
\end{equation}
\end{proof}

We can now provide an accurate bound for the QNN in the depolarising case:
\begin{theorem}[Universal approximation with depolarising noise]\label{thm:depolar_approximation}
For any $R > 0$, $f \in \Ff_R$, $n \in \NN$, and depolarising noise parameters $(\lambda_{\Vg}, \lambda_{\Ug})$, there exists $\ttheta \in \Theta$ such that
\begin{equation}\label{eq:depolar_error_bound}
\left|\int_{\RR^d} \left|f(\xx) - \widetilde{f}^{R}_{n,\theta}(\xx)\right|^2 \mu(\D x)\right|^{\half} \leqslant \frac{\alpha L^1[\widehat{f}]}{\sqrt{n}} + (1-\alpha)\|f\|_{L^2(\mu)}
+ R(1-\alpha)\left(1 - \frac{4n}{2^{\nf}}\right).
\end{equation}
\end{theorem}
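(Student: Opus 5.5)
We argue directly from the explicit form of the noisy output in Corollary~\ref{cor:depolar_qnn_output}, which reduces the noisy problem to the noiseless one plus two explicit error terms. Fix $\ttheta\in\Theta$ given by \cite[Theorem~II.4]{gonon2023universal}, so that the ideal output satisfies $\|f - f^{R}_{n,\theta}\|_{L^2(\mu)}\leqslant L^1[\widehat{f}]/\sqrt{n}$; this $\ttheta$ does not depend on the noise parameters. Since $\alpha\in(0,1]$ is constant in $\xx$, Corollary~\ref{cor:depolar_qnn_output} gives, pointwise in $\xx$,
\begin{equation*}
f(\xx) - \widetilde{f}^{R}_{n,\theta}(\xx)
= \alpha\bigl(f(\xx) - f^{R}_{n,\theta}(\xx)\bigr) + (1-\alpha) f(\xx) - R(1-\alpha)\Bigl(1-\frac{4n}{2^{\nf}}\Bigr).
\end{equation*}
This follows by writing $f = \alpha f + (1-\alpha) f$ and inserting the affine form of $\widetilde{f}^{R}_{n,\theta}$.

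We then take the $L^2(\mu)$ norm and apply Minkowski's inequality to the three terms.
\begin{itemize}
\item[(i)] The first term contributes $\alpha\|f - f^{R}_{n,\theta}\|_{L^2(\mu)}\leqslant \alpha L^1[\widehat{f}]/\sqrt{n}$.
\item[(ii)] The second term contributes $(1-\alpha)\|f\|_{L^2(\mu)}$. This is finite because $f\in\Ff$ is continuous and $|f|\leqslant L^1[\widehat{f}]$ by Fourier inversion, so $f$ is bounded and hence square integrable against the probability measure $\mu$.
\item[(iii)] The third term is a constant. Since $\mu$ is a probability measure, its $L^2(\mu)$ norm is its absolute value, $R(1-\alpha)\bigl|1-\frac{4n}{2^{\nf}}\bigr|$.
\end{itemize}

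The only point needing care is removing the absolute value in (iii). Because $\nf=\lceil\log_2(4n+n_0)\rceil$ with $n_0\geqslant 0$, we have $2^{\nf}\geqslant 4n$, so $1-\frac{4n}{2^{\nf}}\geqslant 0$. Also $R>0$ and $1-\alpha\geqslant 0$ for admissible $\lambda$. If $\lambda$ is allowed slightly above $1$, then $1-\lambda$ could be negative; in that case we would restrict to $\lambda_{\Vg},\lambda_{\Ug}\in[0,1]$, or replace $1-\alpha$ by $|1-\alpha|$, consistent with the statement's $\alpha\in(0,1]$. Summing (i)--(iii) then yields \eqref{eq:depolar_error_bound}.

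There is no real obstacle: the argument is a triangle inequality once Corollary~\ref{cor:depolar_qnn_output} is available. The main subtleties are the sign checks above and the finiteness of $\|f\|_{L^2(\mu)}$.
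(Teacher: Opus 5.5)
Your proposal is correct and is essentially the paper's argument: the paper also inserts the affine form from Corollary~\ref{cor:depolar_qnn_output}, splits off the constant offset using $\|\mathbf{1}\|_{L^2(\mu)}=1$, and then adds and subtracts $\alpha f$ before invoking the noiseless bound, which is your three-term Minkowski step done in two stages. One small correction to your sign remark: on the admissible range $|1-\lambda|\leqslant 1$, so $|\alpha|\leqslant 1$ and hence $1-\alpha\geqslant 0$ always. What can actually fail for $\lambda>1$ is $\alpha\geqslant 0$ in term (i), and your restriction to $\lambda_{\Vg},\lambda_{\Ug}\in[0,1]$, which matches the paper's convention $\alpha\in(0,1]$, already handles this.
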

Note that the last term disappears in the particular case of Remark~\ref{rem:special}.
\begin{proof}
From Corollary~\ref{cor:depolar_qnn_output} and the triangle inequality,
\begin{align*}
\Bigl\|f - \widetilde{f}^{R}_{n,\theta}\|_{L^2(\mu)}
 & = \left\|f - \alpha f^{R}_{n,\theta} - R(1-\alpha) + 4R(1 - \alpha)\frac{n}{2^{\nf}}
\right\|_{L^2(\mu)} \nonumber \\ 
  & \leqslant \Bigl\|f - \alpha f^{R}_{n,\theta}\Bigr\|_{L^2(\mu)} + \left|R(1-\alpha) - 4R(1-\alpha)\frac{n}{2^{\nf}}
\right|\|\mathbf{1}\|_{L^2(\mu)}  \nonumber\\
   & = \|f - \alpha f^{R}_{n,\theta}\|_{L^2(\mu)} + R(1-\alpha)\left(1 - \frac{4n}{2^{\nf}}\right),
\end{align*}
where $\|\mathbf{1}\|_{L^2(\mu)} = 1$ since $\mu$ is a probability measure. The bound on the deviation from the scaled ideal output can be done by adding and subtracting $\alpha f$ and using the triangle inequality:
\begin{align}
\Bigl\|f - \alpha f^{R}_{n,\theta}\Bigr\|_{L^2(\mu)} & \leqslant \|f - \alpha f\|_{L^2(\mu)} + \Bigl\|\alpha f - \alpha f^{R}_{n,\theta}\Bigr\|_{L^2(\mu)}\nonumber\\
& = \|f - \alpha f\|_{L^2(\mu)} + \alpha\Bigl\|f - f^{R}_{n,\theta}\Bigr\|_{L^2(\mu)}\nonumber\\
& = (1-\alpha)\|f\|_{L^2(\mu)} + \alpha\Bigl\|f - f^{R}_{n,\theta}\Bigr\|_{L^2(\mu)}.
\end{align}
From the original noiseless UAT, the last term can be bounded as
$\|f - f^{R}_{n,\theta}\|_{L^2(\mu)} \leqslant \frac{L^1[\widehat{f}]}{\sqrt{n}}$,
therefore~\eqref{eq:depolar_error_bound} follows from
$\Bigl\|f - \alpha f^{R}_{n,\theta}\Bigr\|_{L^2(\mu)} \leqslant (1-\alpha)\|f\|_{L^2(\mu)} + \frac{\alpha L^1[\widehat{f}]}{\sqrt{n}}$.
\end{proof}

\begin{remark}
The total approximation error has three contributions:
\begin{enumerate}
\item \emph{Statistical}: $\frac{\alpha L^1[\widehat{f}]}{\sqrt{n}}$ is the bound from~\cite{gonon2023universal} scaled by the hardware fidelity factor $\alpha$. As noise increases ($\alpha$ decreases), this term worsens.
\item \emph{Systematic}: $(1-\alpha)\|f\|_{L^2(\mu)}$ represents the fundamental limitation: noisy circuits cannot perfectly reproduce the target function. 
As $\alpha$ approaches~$1$
(low noise), this vanishes.
\item \emph{Offset}: the term $R(1-\alpha)\left(1 - \frac{4n}{2^{\nf}}\right)$ is a constant shift introduced by the maximally mixed component of the noise and depends on the qubit count.
\end{enumerate}
\end{remark}

\subsection{Hardware-parameterised model for depolarising noise}\label{subsec:hardware_params}

The parameters~$\lambda_{\Vg}$ and~$\lambda_{\Ug}$ are \emph{effective} depolarising parameters determined by the physical circuit implementation and are computed from gate counts and hardware error rates. \\

\noindent\textbf{Noisy $\Vg$ gate.}
As shown in the appendix of~\cite{gonon2023universal}, the~$\Vg$ gate is simply built from
$N_{1\Qg} \coloneqq \nf - 2$
one-qubit Hadamard gates.
For independent single-qubit depolarising noise with error rate~$\eps_{1\Qg}$ per (Hadamard) gate, the effective parameter is thus
$\lambda_{\Vg} = 1 - (1 - \eps_{1\Qg})^{\nf - 2}$.\\

\noindent\textbf{Noisy $\Ug$ gate.}
The naive multi-controlled implementation of~$\Ug$ requires $N_{2\Qg}= n\nf$ two-qubit gates, but its implementation via Uniformly Controlled Rotations reduces this by a factor of roughly~$15$, so we set
$N_{2\Qg} \approx \frac{n\nf}{15}  = \frac{n \cdot \lceil \log_2(4n + n_{0}) \rceil}{15}$.
For a two-qubit gate error rate $\eps_{2\Qg}$, the effective parameter from gate errors alone is
$\lambda_{\Ug} = 1 - (1 - \eps_{2\Qg})^{N_{2\Qg}}$.\\

\noindent\textbf{Decoherence Contributions.}
The circuit execution time $t_{\text{circ}} := \max\{N_{2\Qg} t_{2\Qg}, N_{1\Qg} t_{1\Qg}\}
$,
with~$t_{1\Qg}$ and~$t_{2\Qg}$ the execution times of one-qubit and two-qubit gates,
leads to additional decoherence. 
For relaxation time~$T_1$ (amplitude damping) and dephasing time~$T_2$ (phase damping), the error probabilities are
$p_{T_1} = 1 - \exp\{-\tfrac{t_{\text{circ}}}{T_1}\}$
and
$p_{T_2} = 1 - \exp\{-\tfrac{t_{\text{circ}}}{T_2}\}$.
The combined effective~$\lambda_{\Ug}$ including decoherence is approximately
$\lambda_{\Ug} \approx 1 - (1 - \lambda_{\Ug})(1 - \tfrac{1}{2} p_{T_1})(1 - \tfrac{1}{2} p_{T_2})$.\\

\noindent\textbf{Combined characterisation.}
Table~S1.1 summarises the noise parameters $\boldsymbol{\eps} \coloneqq (\eps_{1\Qg}, \eps_{2\Qg}, T_1, T_2, t_{2\Qg}, N_{2\Qg})$ for the IBM quantum backend \texttt{ibm\_fez} (Heron r2 processor), Quantinuum H2, and the latest Rigetti modular system (Cepheus-1-36Q). For comparison, 
It also reports the noise parameters of the trapped ion Quantinuum H2 ($56$~qubits) and the modular superconducting Rigetti Cepheus-1-36Q ($36$~qubits) systems, allowing the reader to evaluate the analogous bound on alternative architectures.

\subsection{Readout error in the depolarising case}

Readout error is a classical effect occurring after wavefunction collapse and must be modelled separately from quantum noise.

\begin{definition}
Let~$\pf$ denote the single-qubit readout error probability 
(bit-flip probability during measurement). 
For measurement outcome $m \in \{0, 1, 2, 3\}$ encoded in the two target qubits, the measured probability reads
$\overline{\PP}_m := \sum_{m'=0}^{3} q_{m,m'} \widetilde{\PP}_{m'}$,
with~$\widetilde{\PP}_{m'}$ the probability of output~$m'$ in the noisy QNN above; the confusion matrix 
$Q = (q_{m,m'} = \pf^{H(m,m')} (1-\pf)^{2 - H(m,m')})$ -- with $H(m,m')$ the Hamming distance between the 2-bit binary representations of~$m$ and~$m'$ -- gives the probabilities of measuring outcome~$m$ when the true quantum state would yield~$m'$.
\end{definition}

\begin{proposition}\label{prop:full_noise_approximation}
For depolarising noise with readout error~$\pf$, the approximation bound is,
with $\overline{f}^{R}_{n,\theta} := R\{1-2(\overline{\PP}_1
+\overline{\PP}_2)\}$,
\begin{equation}
\left|\int_{\RR^d} \left|f(\xx) - \overline{f}^{R}_{n,\theta}(\xx)\right|^2 \mu(\D x)\right|^{\half} \leqslant \frac{\alpha}{\sqrt{n}}L^1[\widehat{f}]
+ (1-\alpha)\|f\|_{L^2(\mu)} + R(1-\alpha)\left(1 - \frac{4n}{2^{\nf}}\right) + 4R\pf.
\end{equation}
\end{proposition}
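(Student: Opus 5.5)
The plan is to reduce Proposition~\ref{prop:full_noise_approximation} to Theorem~\ref{thm:depolar_approximation} by a single triangle inequality. We split
\[
\bigl\|f - \overline{f}^{R}_{n,\theta}\bigr\|_{L^2(\mu)} \leqslant \bigl\|f - \widetilde{f}^{R}_{n,\theta}\bigr\|_{L^2(\mu)} + \bigl\|\widetilde{f}^{R}_{n,\theta} - \overline{f}^{R}_{n,\theta}\bigr\|_{L^2(\mu)}.
\]
We choose $\ttheta$ as given by Theorem~\ref{thm:depolar_approximation}, so the first term is bounded by the first three terms of the claimed bound. It then suffices to show the pointwise estimate $|\widetilde{f}^{R}_{n,\theta}(\xx) - \overline{f}^{R}_{n,\theta}(\xx)| \leqslant 4R\pf$ for every $\xx$. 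Since $\mu$ is a probability measure, this pointwise bound carries over to the $L^2(\mu)$ norm.

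For the pointwise estimate, note that $\widetilde{f}^{R}_{n,\theta} - \overline{f}^{R}_{n,\theta} = 2R\bigl[(\overline{\PP}_1+\overline{\PP}_2)-(\widetilde{\PP}_1+\widetilde{\PP}_2)\bigr]$. We need to show that the readout perturbation of $\widetilde{\PP}_1+\widetilde{\PP}_2$ is at most $2\pf$. Write $S = \{1,2\}$. Using the confusion matrix,
\[
\sum_{m\in S}\overline{\PP}_m - \sum_{m\in S}\widetilde{\PP}_m = \sum_{m'=0}^{3}\widetilde{\PP}_{m'}\Bigl(\sum_{m\in S} q_{m,m'} - \one_{\{m'\in S\}}\Bigr).
\]
The columns of $Q$ sum to one, since $\sum_m q_{m,m'} = ((1-\pf)+\pf)^2 = 1$. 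Therefore, for $m'\in S$, the bracket equals $-\sum_{m\notin S} q_{m,m'}$, and for $m'\notin S$ it equals $\sum_{m\in S} q_{m,m'}$. In either case its absolute value is the probability that the two-bit readout of $m'$ is sent to the other class.

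The key bound is that every off-diagonal entry satisfies $q_{m,m'}\leqslant \pf$ when $H(m,m')\geqslant1$. In fact, $\sum_{m\neq m'} q_{m,m'} = 1-(1-\pf)^2 \leqslant 2\pf$, so each bracket is at most $2\pf$ in absolute value. Because the $\widetilde{\PP}_{m'}$ are probabilities, summing over $m'\in\{0,\dots,3\}$ yields at most $\sum_{m'}\widetilde{\PP}_{m'}\cdot 2\pf \leqslant 2\pf$. This uses $\sum_{m'=0}^3\widetilde{\PP}_{m'}\leqslant 1$, which holds since the $\Pi_m$ are orthogonal projectors. Multiplying by $2R$ gives the required $4R\pf$.

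The main point needing care is this last estimate: keeping the constant at $4R\pf$ rather than something cruder like $8R\pf$ obtained from bounding each of $\overline{\PP}_1$ and $\overline{\PP}_2$ separately. Handling the sum $\overline{\PP}_1+\overline{\PP}_2$ jointly via column-stochasticity avoids this loss. Everything else is a direct invocation of Theorem~\ref{thm:depolar_approximation}.
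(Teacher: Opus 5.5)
Your argument is correct. It shares the paper's skeleton: a triangle inequality against Theorem~\ref{thm:depolar_approximation} with the same parameters, plus a uniform pointwise readout bound $|\overline{f}^{R}_{n,\theta}-\widetilde{f}^{R}_{n,\theta}|\leqslant 4R\pf$, which passes to $L^2(\mu)$ because $\mu$ is a probability measure.

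The readout estimate is where you differ. The paper bounds each outcome separately, asserting $\max_m|\overline{\PP}_m-\widetilde{\PP}_m|\leqslant 2\pf$ via a worst-case ``at most two bit flips'' argument, and then states $4R\pf$. Taken literally, summing that per-outcome bound over $m\in\{1,2\}$ and multiplying by $2R$ gives only $8R\pf$. Your joint treatment of $\overline{\PP}_1+\overline{\PP}_2$, using column-stochasticity of $Q$ and $\sum_{m'}\widetilde{\PP}_{m'}\leqslant 1$, is exactly what justifies the stated constant. You were right to single this out as the delicate step.

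You can sharpen it slightly. The outcomes $\{1,2\}$ are exactly the odd-parity two-bit strings, so every $m'$ switches class with probability exactly $2\pf(1-\pf)$ (one flip). This gives $|\overline{f}^{R}_{n,\theta}-\widetilde{f}^{R}_{n,\theta}|\leqslant 4R\pf(1-\pf)$. Your side remark that each off-diagonal entry satisfies $q_{m,m'}\leqslant\pf$ is true but never used.
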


\begin{proof}
The matrix~$Q$ yields 
$\overline{\PP}_m^{} - \widetilde{\PP}_m = \sum_{m'\neq m} q_{m,m'}\widetilde{\PP}_{m'} - \sum_{m'\neq m} q_{m',m}\widetilde{\PP}_m$.
For the QNN, 
$\Bigl|\overline{f}^{R}_{n,\theta}(\xx) - \widetilde{f}^{R}_{n,\theta}(\xx)\Bigr| = 2R\left|\sum_{m=1,2}\left(\overline{\PP}_m - \widetilde{\PP}_m\right)\right|$.
In the worst case, when all probability mass migrates to wrong outcomes,
$\max_{m,m'} |\overline{\PP}_m - \widetilde{\PP}_m| \leqslant 2\pf$,
since at most two bit flips can occur. 
Thus
$\|\overline{f}^{R}_{n,\theta} - \widetilde{f}^{R}_{n,\theta}\|_{L^2(\mu)} \leqslant 4R\pf$.
Combining with~\eqref{eq:depolar_error_bound} via triangle inequality gives the theorem.
\end{proof}

\section{Numerical experiments}
\label{sec:experiments}

We provide details regarding the implementation and the specificities of the quantum circuit for the Black-Scholes model from Section~\ref{sec:BS}. These numerical experiments are designed to validate the non-asymptotic theoretical bounds within the finite-qubit regime and to assess the practical viability of the network under realistic hardware constraints. 
They confirm that the empirical approximation errors consistently respect the derived theoretical limits, demonstrating that the architecture models expectation functions even under noise on physical quantum processors.


\subsection{Circuit architecture details}

\subsubsection*{Input Normalisation}
The Black-Scholes parameters $(S, K, T, r, \sigma)$ are normalised to the hypercube $[0,1]^5$ via 
\begin{equation}\label{eq:normalisation}
    x := \frac{x - \underline{x}}{\overline{x} - \underline{x}}, \qquad \text{for }x \in \{S,K,T,r,\sigma\}, \text{ with }
    \underline{x}:=\min x
    \text{ and }
    \overline{x}:=\max x.
\end{equation}
Any value outside the training range is clamped to $[0,1]$ to prevent extrapolation artefacts.
The scaling parameter is set to
$R = \lceil 1.1 \times \max_{i} \mathrm{P}_i \rceil$,
with~$\mathrm{P}_i$ the target option price at the $i$-th training point. 
This ensures all target prices lie within $[0, R]$.

\subsubsection*{Circuit architecture}

For accuracy parameter $n$ (the number of ``accuracy blocks''), the circuit operates on
$\nf = \lceil \log_2(4n) \rceil$ qubits,
and we consider here
$n = 8$, 
or $\nf = 5$ qubits.
The circuit comprises three stages:
\begin{itemize}
\item \textbf{State preparation ($\Vg$)}: Hadamard gates applied to the $(\nf - 2)$ control qubits, leaving the two target qubits in $\ket{00}$.
    
\item \textbf{Parameterised unitary ($\Ug$)}: Block-diagonal unitary implementing $n$ accuracy blocks, each contributing a term to the Fourier-like approximation.
    
\item \textbf{Measurement}: Computational basis measurement.
\end{itemize}

\subsubsection*{Uniformly Controlled Rotation Decomposition}
The parameterised unitary~$\Ug(\ttheta)$ introduced in~\cite{gonon2023universal}
has a block-diagonal structure 
\begin{equation}
    \Ug(\ttheta,\xx) = \sum_{k=0}^{n-1} \ketbra{k}{k} \otimes \overline{\Ug}^{(k)}(\ttheta^{k},\xx),
\end{equation}
with $\ttheta^k = (\ab^k,b^k,\gamma^k)$
and where each $\overline{\Ug}^{(k)}(\ttheta^{k},\xx) = 
\Ug_1^{(k)}(\ab^{k},b^{k},\xx) \otimes \Ug_2^{(k)}(\gamma^{k})$ acts on the two target qubits. 
A naive implementation using multi-controlled gates yields $\Oo(n \nf)$ two-qubit gates.
However, \emph{Uniformly controlled rotations} (UCR)~\cite{mottonen2004} allow the decomposition of~$\Ug$ into
one UCR$_z$ on qubit~0, conjugated by Hadamard gates,
and one UCR$_y$ on  qubit~1.
This reduces the two-qubit gate count by a factor of~$15$ compared to multi-controlled decomposition, critical for hardware execution where two-qubit gate errors dominate 
(to keep~$\lambda_{\Ug}$ small on hardware, as in Table~S1.1).

\subsubsection*{Parameters optimisation}

We implement three parameter optimisation strategies:

\begin{center}
\begin{tabular}{clp{7cm}}
\toprule
\textbf{Method} & \textbf{Algorithm} & \textbf{Description} \\
\midrule
A & L-BFGS-B & Quasi-Newton with box constraints; simultaneous optimisation of all parameters \\
B & Two-stage & First optimise $(\ab, b)$ with fixed~$\gamma$, then refine~$\gamma$ \\
C & Adam & Gradient descent with adaptive learning rates \\
\bottomrule
\end{tabular}
\end{center}

For the loss function, we consider the usual mean-squared error
\begin{equation}
\mathcal{L}(\theta) := \frac{1}{2^{\nf}}
 \sum_{i=1}^{2^{\nf}}
 \Big( f_n^R(x_i; \theta) - \mathrm{P}_i \Big)^2.
\end{equation}

\subsubsection*{Measurement}
Outcomes are grouped to extract the probabilities $\{\PP_m\}_{m=0,\cdots,3}$. 
For each outcome $o \in \{0, \ldots, 2^{\nf}-1\}$,
we write
$o = 4k + m$
with $k \in \{0, \ldots, n-1\}$, $m \in \{0, 1, 2, 3\}$.
Only outcomes with $k < n$ contribute (states with $k \geqslant n$ arise from padding to a power-of-two dimension). The probability $\PP_m$ accumulates counts from all valid $o \equiv j \pmod{4}$:
\begin{equation}
    \PP_m = \frac{1}{N_{\text{shots}}} \sum_{\substack{o : o \bmod 4 = m \\ o < 4n}} c_o,
\end{equation}
where $c_{o}$ denotes the count for outcome~$o$.
The output of the QNN is then computed as (see, for example, Corollary~\ref{cor:depolar_qnn_output})
$f_n^R(\xx) = R[1 - 2(\PP_1 + \PP_2)]$.

\subsubsection*{Hardware execution}

The circuits are run on \texttt{ibm\_fez} using Qiskit Runtime's \texttt{SamplerV2} primitive, which requires \emph{Instruction Set Architecture} (ISA) circuits. Key steps included:

\begin{itemize}
\item \textbf{Transpilation}: Circuits are transpiled with \texttt{optimization\_level=3}, mapping abstract gates to the backend's native gate set and optimising for depth.
\item \textbf{Batching}: All test circuits are submitted as a single job to minimise queue overhead.
\item \textbf{Shot budget}: $N_{\text{shots}} = 8192$ per circuit, balancing statistical precision vs runtime.
\end{itemize}

The statistical error on the output price is
\begin{equation}\label{eq:stat_error}
2R \sqrt{\frac{(\PP_1 + \PP_2)(1 - \PP_1 - \PP_2)}{N_{\text{shots}}}}.
\end{equation}

Experiments are implemented in \texttt{Python-qiskit} and executed either on simulators (via \texttt{qiskit-aer}) or on the backend \texttt{ibm\_fez},
with noise parameters 
$(\eps_{1\Qg}, 
\eps_{2\Qg}, \pf, T_1,T_2)$
in Table~S1.1. 


\subsubsection{Circuit validation}
\label{subsec:circuit_validation}
For each random parameter $(\ttheta, \xx)$, the sampled output is compared against the analytical formula
\begin{equation}
  f^{R}_{n,\ttheta}(\xx)
  = \frac{1}{n}\sum_{i=1}^{n} R\cos(\gamma_i)\cos(b_i + \ab_i \cdot \xx)
\end{equation}
from~\cite{gonon2023universal}.
The residuals are uniformly bounded by $R/\sqrt{N_{\mathrm{shots}}}$ across all tested configurations, confirming correct circuit implementation.
The parameterised unitary $\Ug$ is built using uniformly controlled rotations (UCR)~\cite{mottonen2004}.  The inner product $|\langle\psi_{\mathrm{UCR}}|\psi_{\mathrm{naive}}\rangle|$ equals $1$ to tolerance $10^{-9}$ for $n \in \{2,4,8\}$, $d \in \{1,5\}$ and parameter scales $\in\{0.1, 1.0, 5.0\}$ (three random parameter sets each).  

\subsection{Gaussian density approximation}
\label{subsec:gaussian_expt}

We first consider a simple version of 
Statement~\ref{st:state} and Proposition~\ref{prop:convolutional}, in dimension $d=1$, with~$L$ the random variable equal to zero almost surely, $\mu(\cdot)$ the Lebesgue measure, and~$\Nn$ the Gaussian density, 
as in Section~\ref{sec:GaussianDensity}.
Then $\EE[\Nn(\E^{x+L})] = f_{\sigma}(x)$.
The QNN with~$n$ accuracy blocks then approximates~$f_{\sigma}$ with error
$\eps_n := \frac{L^1[\widehat{f}_{\sigma}]}{\sqrt{n}} = \frac{1}{\sigma\sqrt{2\pi n}}$,
hence, for a given approximation accuracy $\eps > 0$, the required number of accuracy blocks is
\begin{equation}\label{eq:n_vs_epsilon}
n \geqslant \frac{1}{2\pi\sigma^2 \eps^2}.
\end{equation}
Considering the case $n_0=0$, i.e. $4n = 2^\nf$, this can be recast in the minimal number of qubits required given accuracy~$\eps$, namely (see also Figure~\ref{fig:Qubit_accuracy})
$\nf \geqslant \frac{1}{\log(2)}\log\left(\frac{2}{\pi\sigma^2 \eps^2}\right)$.
\begin{figure}[h]
\centering
\includegraphics[scale=0.5]{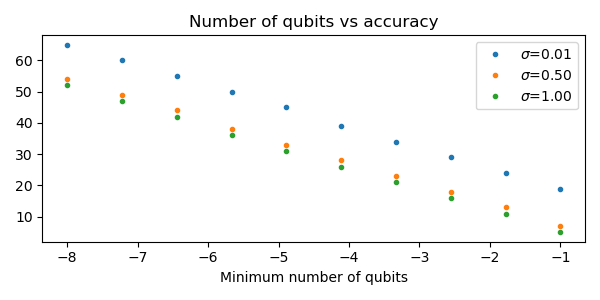}
\caption{Minimal number of required qubits.}
\label{fig:Qubit_accuracy}
\end{figure}
This highlights the quadratic dependence $n = \Oo(\eps^{-2})$, typical of Monte Carlo-type approximation methods. The dependence on~$\sigma^{-2}$ indicates that narrower Gaussians (smaller~$\sigma$) require more accuracy blocks, consistent with the broader frequency content of their Fourier transforms.
We consider the domain $x \in [-4, 4]$ with~$100$ uniformly spaced points
and optimise the parameters by differential evolution.
Figures~S2.1 and~S2.2  in the supplementary material illustrate the results numerically.
Figure~S2.1 displays the approximation~$f^R_{n,\theta}$ against the true density for $(\sigma,n)= (1,8)$, with the pointwise absolute error on a logarithmic scale; the empirical RMSE lies below the theoretical~$L^2$ bound $L^1[\widehat{f}_\sigma]/\sqrt{n}$ from Theorem~\ref{thm:general_noise_bound} (noiseless case).
Figure~S2.2(a) plots the measured RMSE against $n \in \{2, 4, 8, 12, 16\}$, together with the theoretical $L^1[\widehat{f}_\sigma]/\sqrt{n}$ curve and an empirically rescaled $1/\sqrt{n}$ fit; the right panel shows the ratio MAE/$\eps_n$, which lies below~$1$ throughout, confirming that the bound of Statement~\ref{st:state} holds in the finite-$n$ regime.
Figure~S2.2(b) compares the approximation quality for $\sigma \in \{0.5, 1.0, 2.0\}$ for fixed $n = 8$: narrower Gaussians carry larger $L^1[\widehat{f}_\sigma] = 1/(\sigma\sqrt{2\pi})$ and incur proportionally larger errors, in quantitative agreement with~\eqref{eq:n_vs_epsilon}.

\subsection{Black-Scholes Put option pricing: noiseless regime}
\label{subsec:bs_expt}
Black-Scholes European Put option prices are computed for $S \in [0.8, 1.2]$, $K \in [0.9, 1.1]$, $T \in [0.5, 1.0]$, $r \in [0.02, 0.05]$, $\sigma \in [0.1, 0.3]$, all normalised to $[0,1]^5$ via~\eqref{eq:normalisation}.
The QNN is trained on the five-dimensional Put-pricing map and evaluated on a $40 \times 40$ grid over $(K, \sigma\sqrt{T}) \in [85, 115] \times [0.05, 0.35]$ at fixed $S_{0} = 100$, $r = 0.03$.
Figure~S2.3(a) displays the true Black-Scholes surface together with the theoretical error envelope $\pm\mathcal{B}_f(a^*)/\sqrt{n}$ from Example~\ref{ex:TruncatedPutBound}; 
Figure~S2.3(b) shows the absolute error surface with horizontal reference planes at the MAE and maximum error.
Figure~S2.3(c) overlays both surfaces and Figure~S2.3(d) provides a 2D heatmap of $|\mathrm{C}_{\mathrm{BS}} - f^R_{n,\theta}|$ together with an error histogram. 
The largest absolute errors concentrate near the deep-in-the-money boundary, consistent with the high-frequency content of~$\widehat{\Phi}_1$ in Example~\ref{ex:TruncatedPutBound}.
Figures~S2.4(a) and~S2.4(b) compare Put option prices from Methods~A and~B against the true value on a random test set; the shaded band is the total error bound, and points are coloured green (within bound) or red (outside).
Figure~S2.5 presents the four-panel error analysis for Method~A: error distribution, approximation error versus $n$, statistical error versus~$N_{\mathrm{shots}}$ and relative error distribution.
Figure~S2.6 plots the empirical MAE for Method~A against $n \in \{2, 4, 6, 8\}$ alongside the theoretical bound
$\eps_n = \frac{L^1[\widehat{f}]}{\sqrt{n}} = 
\frac{\Bf}{2\pi\sqrt{n}}S_{\max}$.
The ratio MAE$/\eps_n$ lies below $1$ at all four values of $n$, confirming that the bound is not merely asymptotic but holds in the finite-$n$ NISQ regime.

\subsection{Noise simulation}
\label{subsec:noise_expt}
We now consider depolarisation noise, as in Section~\ref{sec:depolarisation} and illustrated in Figure~S2.7.
Corollary~\ref{cor:depolar_qnn_output} predicts
$\widetilde{f}^R_{n,\theta}(\xx)
  = \alpha\, f^R_{n,\theta}(\xx) + R(1-\alpha)(1 - \frac{4n}{2^{\nf}})$, 
with $\alpha = (1-\lambda_{\Vg})(1-\lambda_{\Ug})$,
which is validated against exact density-matrix simulation. 
For $\eps \in \{0.001, 0.005, 0.01, 0.02\}$, the state $\widetilde{\rho}_{2}^{\Delta}$ (Proposition~\ref{prop:rho_depol}) is constructed via the projectors~$\Pi_m$ from Section~\ref{sec:NoisyProba} and compared against \texttt{AerSimulator} with a depolarising \texttt{NoiseModel}; agreement holds across all 20 test points, validating the theory from Section~\ref{sec:NoiseModel}.
Figure~S2.7(a) shows QNN outputs for each $\eps$ against the Black-Scholes reference: increasing $\eps$ progressively contracts outputs towards the constant bias $R(1-\alpha)(1-\frac{4n}{2^{\nf}})$ as predicted.
Figure~S2.7(b) plots MAE on a log--log scale against $\eps$; at $\eps = 0.001$ (near the \texttt{ibm\_fez} two-qubit error rate $\eps_{2\Qg}=2.548\cdot 10^{-3}$) the degradation is modest, while at $\eps = 0.02$ the systematic term $(1-\alpha)\|f\|_{L^2(\mu)}$ of Theorem~\ref{thm:depolar_approximation} dominates.

\subsection{Hardware execution on \texttt{ibm\_fez}}
\label{subsec:hardware_expt}

Circuits trained by Method~A are run on \texttt{ibm\_fez} via \texttt{SamplerV2} with $N_{\mathrm{shots}} = 8192$ and \texttt{optimization\_level=3} transpilation on $10$ randomly drawn test points $(S,K,T,r,\sigma)$. 
Live calibration data yield the hardware vector $\boldsymbol{\eps}$ in Table~S1.1; the fidelity factor~$\alpha$ defined in~\eqref{eq:noisy_prob} is then determined from $\boldsymbol{\eps}$ through~$\lambda_{\Ug}$ and ~$\lambda_{\Vg}$, and the total bound from Proposition~\ref{prop:full_noise_approximation} reads
\begin{equation}\label{eq:eps_total}
\eps_{\mathrm{total}} = \frac{\alpha L^1[\widehat{f}]}{\sqrt{n}} + (1-\alpha)\|f\|_{L^2(\mu)}  + R(1-\alpha)\left(1 - \frac{4n}{2^{\nf}}\right)  + 4R\pf.
\end{equation}

Two features of Figure~\ref{fig:hardware_main} deserve emphasis. Firstly, the empirical MAE lies inside the analytical envelope on every test point: the bound is not asymptotic but holds at current hardware noise levels. Secondly, the comprehensive noise model (incorporating amplitude damping, phase damping, and depolarising channels) correlates with the hardware output at $0.9973$, suggesting that the simple depolarising abstraction of Section~\ref{sec:depolarisation} captures the dominant noise mechanism on \texttt{ibm\_fez} once decoherence is folded into $\lambda_{\Ug}$. We refer the reader to the supplementary material for the full per-panel numerical breakdown.

\begin{figure}[H]
\centering
\includegraphics[width=\linewidth]{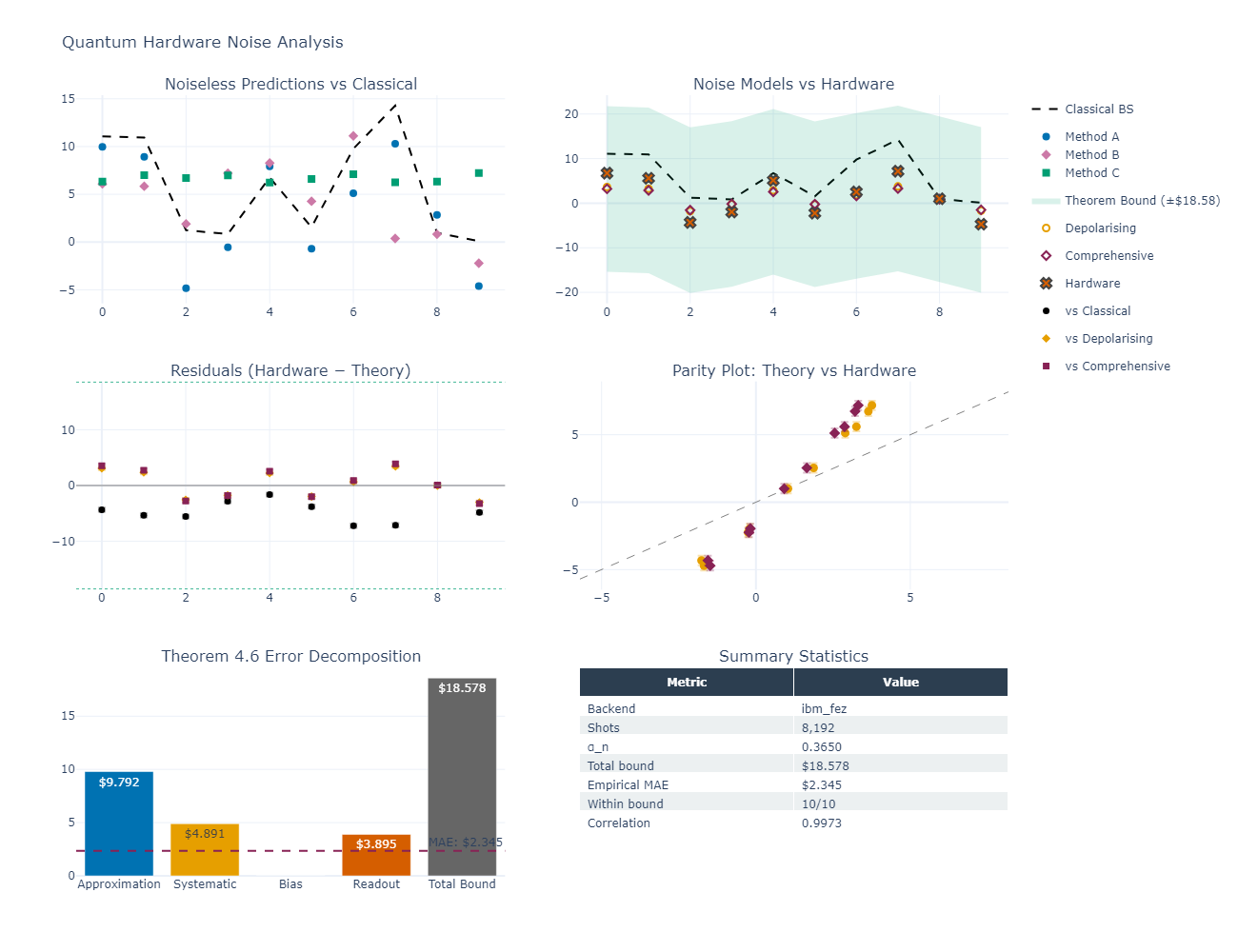}
\caption{Hardware execution on \texttt{ibm\_fez}. 
\emph{Panel~(a)}: noiseless simulation methods vs.\ classical Black-Scholes.
\emph{Panel~(b)}: hardware measurements (with shot-noise error bars from~\eqref{eq:stat_error}) against the depolarising and comprehensive noise model predictions, with the $\pm\eps_{\mathrm{total}}$ envelope from~\eqref{eq:eps_total}; the empirical MAE satisfies the bound on $10/10$ test points.
\emph{Panel~(c)}: residuals $(P_{\mathrm{hw}}-P_{\mathrm{theory}})$ with $\pm\eps_{\mathrm{total}}$ reference lines.
\emph{Panel~(d)}: parity plot; the comprehensive noise model (amplitude $+$ phase damping $+$ depolarising) tracks the diagonal more closely than the depolarising-only model.
\emph{Panel~(e)}: decomposition of $\eps_{\mathrm{total}}$ into approximation, systematic, bias, and readout contributions; the systematic term $(1-\alpha)\|f\|_{L^2(\mu)}$ dominates at current noise levels.
\emph{Panel~(f)}: summary statistics including  $\alpha\!=\!0.3650$, total bound $\$18.578$, empirical MAE $\$2.345$, and Pearson correlation $0.9973$ with the comprehensive noise model.}
\label{fig:hardware_main}
\end{figure}


\section{Conclusion and outlook}\label{sec:conclusion}

We have derived a quantitative universal approximation theorem with explicit, non-asymptotic error bounds for noisy quantum neural networks. The bound decomposes the error into three transparent contributions -- a noiseless approximation term scaled by the hardware fidelity $\alpha$, a systematic term $(1-\alpha)\|f\|_{L^2(\mu)}$, and an offset depending on the qubit count -- to which a~$4R\pf$ readout term is added when classical measurement error is included. Specialising to depolarising noise calibrated to real hardware, we showed that a two-parameter affine post-processing layer (Theorem~\ref{cor:noise_correction}) exactly cancels the bias, giving a concrete prescription for noise-aware training without modifying the underlying circuit.
We focused on functions given as expectations $f(\xx)=\EE[\Phi(\xx+L)]$, motivated by quantitative finance. For exponential L\'evy models, we obtained explicit constants in terms of the characteristic function of $L_T$, and worked out sharp bounds for Black-Scholes Put pricing. Numerical experiments on the IBM Heron r2 processor \texttt{ibm\_fez} confirm that the empirical mean absolute error stays inside the analytical envelope on every tested point, with Pearson correlation $0.9973$ between hardware outputs and the comprehensive noise model.
Several questions remain open. The constants we obtain for L\'evy models scale as $\Oo(T^{-d/2})$ and are therefore not informative for high-dimensional basket options; sharper dimension-dependent bounds, perhaps exploiting the convolutional structure of $f$, would be valuable. The depolarising specialisation could be extended to non-unital noise channels (amplitude damping in isolation, leakage), where the recent results of~\cite{mele2024noise,franca2025efficient} suggest qualitatively different behaviour. Finally, the architecture of~\cite{gonon2023universal} is one of several Fourier-type ans\"atze; adapting our analysis to data-reuploading circuits~\cite{PerezSalinas2020datareuploading,Schuld2021} and to fault-tolerant regimes are natural next steps.
\appendix
\section{Noise parameter table}
The following table summarises all the error parameters used in the paper and explained in Section~3(f).

\begin{table}[h]
\centering    \renewcommand{\arraystretch}{1.3}
\begin{tabular}{|c|c|c|c|c|}
\hline
\textbf{Hardware} & \textbf{Parameter} & \textbf{Symbol} & \textbf{Value} & \textbf{Source} \\
\hline
\multirow{11}{*}{\textbf{IBM \texttt{ibm\_fez}}} 
& Single-qubit gate error ($\sqrt{\mathtt{X}}$) & $\eps_{1\Qg}$ & $2.761 \cdot  10^{-4}$ & \cite{ibm_quantum_platform} \\
& Two-qubit gate error ($_{c}\Zg$) & $\eps_{2\Qg}$ & $2.548 \cdot  10^{-3}$ & \cite{ibm_quantum_platform} \\
    & Relaxation time & $T_1$ & $144.97\, \mu$s & \cite{ibm_quantum_platform} \\
    & Dephasing time & $T_2$ & $99.9\, \mu$s & \cite{ibm_quantum_platform} \\
    & Two-qubit gate duration & $t_{2\Qg}$ & $68$ ns & \cite{AbuGhanem2024} \\
    & Circuit execution time (per $N_{2\Qg}$) & $t_{\text{circ}}$ & $68 N_{2\Qg}$ ns & Computed \\
    & $p_{T_1}$ & $p_{T_1}$ & $1 - \E^{-t_{\text{circ}} / T_1}$ & Computed \\
    & $p_{T_2}$ & $p_{T_2}$ & $1 - \E^{-t_{\text{circ}} / T_2}$ & Computed \\
    & $\lambda_{\Ug}$ & $\lambda_{\Ug}$ & $1 - (1 - \eps_{2\Qg})^{N_{2\Qg}}$ & Computed \\
    \hline
    \multirow{6}{*}{\textbf{Quantinuum H2}} 
    & Single-qubit gate infidelity & $\eps_{1\Qg}$ & $3 \cdot  10^{-5}$ & \cite{Quantinuum_H2_2025} \\
    & Two-qubit gate infidelity & $\eps_{2\Qg}$ & $1 \cdot  10^{-3}$ & \cite{Quantinuum_H2_2025} \\
    & Single-qubit gate duration & $t_{1\Qg}$ & $\sim 5\, \mu$s & \cite{Quantinuum_H2_2025} \\
    & Two-qubit gate duration & $t_{2\Qg}$ & $\sim 100\, \mu$s & \cite{Quantinuum_H2_2025} \\
    & Memory error per qubit (depth-1) & $\lambda_{\text{mem}}$ & $2 \cdot  10^{-4}$ & \cite{Quantinuum_H2_2025} \\
    & State prep/meas error & $r$ & $1 \cdot  10^{-3}$ & \cite{Quantinuum_H2_2025} \\
    \hline
    \multirow{5}{*}{\textbf{Rigetti Cepheus}} 
    & Two-qubit gate error & $\eps_{2\Qg}$ & $5 \cdot  10^{-3}$ & \cite{Rigetti2025_36Q} \\
    & Gate speed & $t_{2\Qg}$ & $60$--$80$ ns & \cite{Rigetti_Q2_2025} \\
    & $T_1$ (typical superconducting) & $T_1$ & $\sim 50$--$100\, \mu$s & \cite{Rigetti2025_36Q} \\
    & $T_2$ (typical superconducting) & $T_2$ & $\sim 20$--$50\, \mu$s & \cite{Rigetti2025_36Q} \\
    \hline
    \end{tabular}
    \caption{Hardware parameters and computed depolarising noise parameters for \texttt{ibm\_fez} (Heron r2, 156 qubits), Quantinuum H2 (56 qubits, trapped-ion), and Rigetti Cepheus-1-36Q (36 qubits, modular superconducting). Raw hardware parameters are sourced from live calibration or official datasheets; computed parameters ($\lambda_{\Vg}$, $\lambda_{\Ug}$, $\lambda_{\Ug}$, and decoherence terms) are taken from Section~3(f).
    }
\label{tab:noise_params}
    \end{table}

\clearpage 

\section{Additional figures}
\subsection{Gaussian experiments}
The following graphs provide further visual clarifications to the numerical results presented in Section~4(b).

\begin{figure}[H] 
\centering
\includegraphics[width=\linewidth]{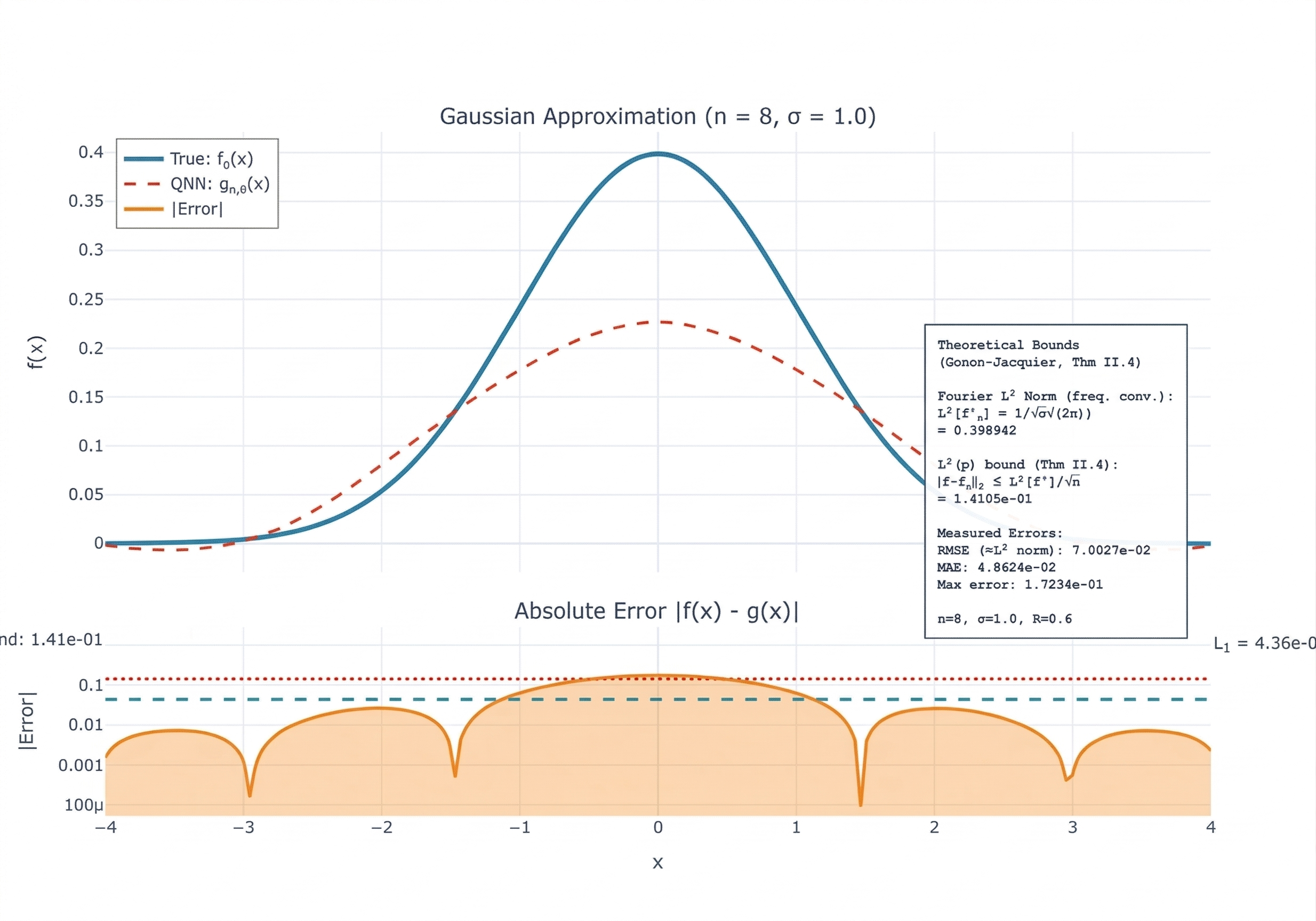}
\caption{Gaussian density approximations (Statement~2.1). (Part~1 of~2).
Approximation of~$f_\sigma$ for $(\sigma,n)=(1,8)$ (top: true vs.\ QNN; bottom: pointwise $|\mathrm{error}|$, log scale). Dashed: $L^2$ bound $L^1[\widehat{f}_\sigma]/\sqrt{n}$.}
\label{fig:gaussian_part1}
\end{figure}

\begin{figure}[H]
\centering
\begin{subfigure}{\linewidth}
\centering
\includegraphics[width=\linewidth]{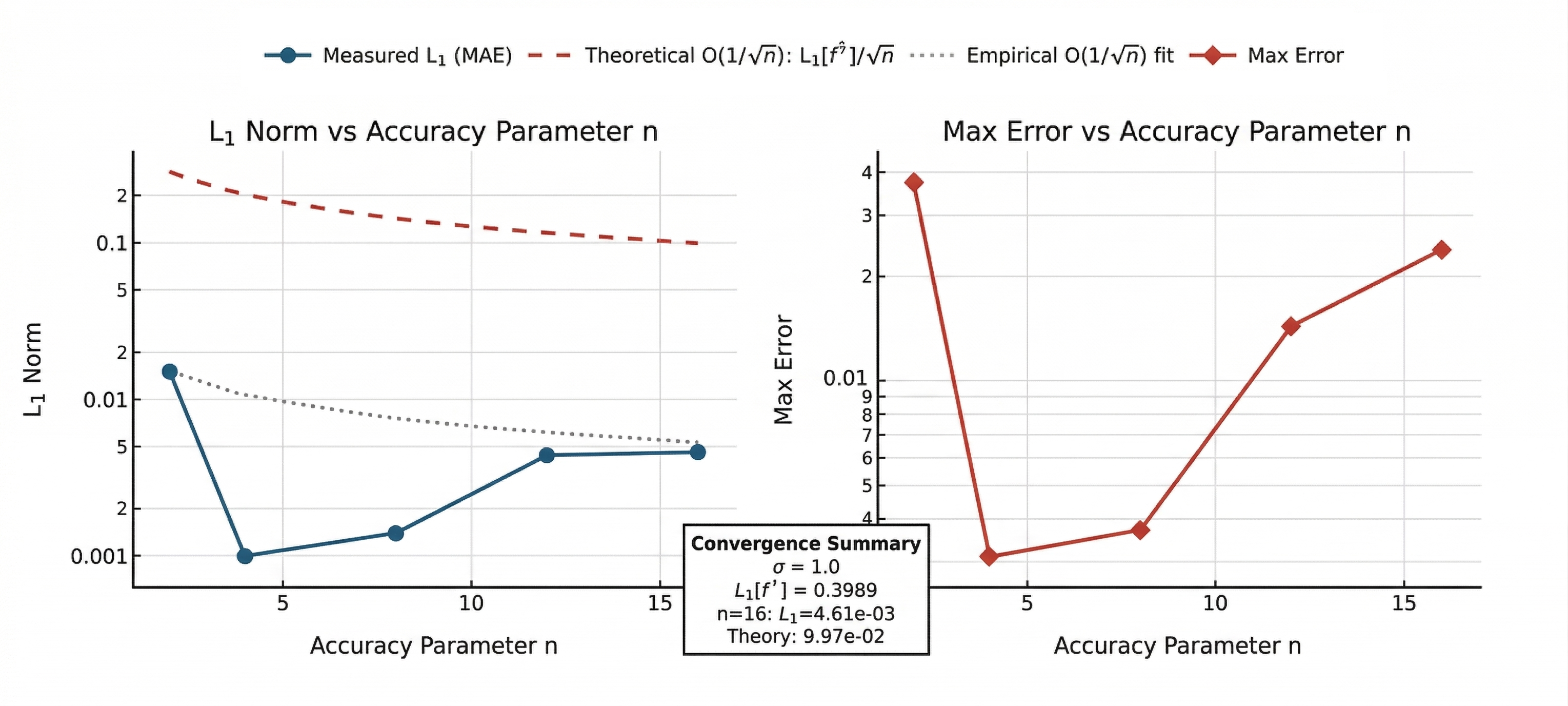}
\caption{Left: RMSE vs.\ $n$ (blue circles), theoretical bound (red), $\Oo(\frac{1}{\sqrt{n}})$ fit (dotted). Right: MAE$/\eps_n$.}
\label{fig:gaussian_convergence}
\end{subfigure}
\vspace{0.8cm}
\begin{subfigure}{\linewidth}
\centering
\includegraphics[width=1\linewidth]{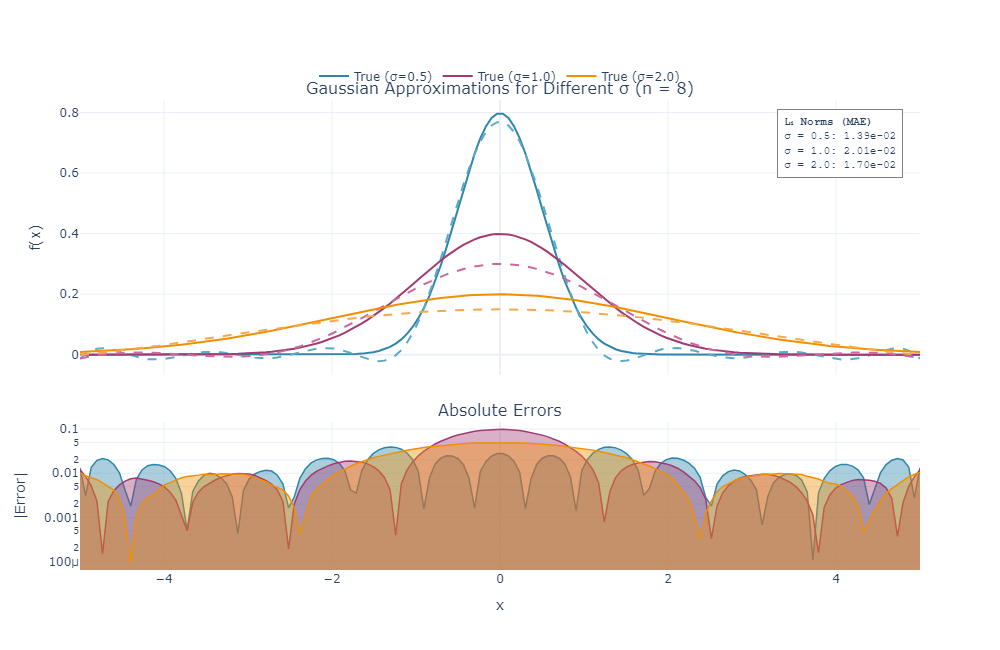}
\caption{Errors for $\sigma \in \{0.5, 1, 2\}$, $n = 8$ (top: functions; bottom: $|\mathrm{error}|$, log scale).}
\label{fig:multi_sigma}
\end{subfigure}
\caption{Gaussian density approximations (Statement~2.1). (Part 2~of~2)}
\label{fig:gaussian_part2}
\end{figure}

\clearpage

\subsection{Black-Scholes Put option in the noiseless regime}
Numerical illustrations pertaining to the Black-Scholes Put option case in the noiseless regime from Section~4(c).

\begin{figure}[h]
  \centering
  \begin{subfigure}[b]{0.48\textwidth}
    \centering
    \includegraphics[width=\linewidth]{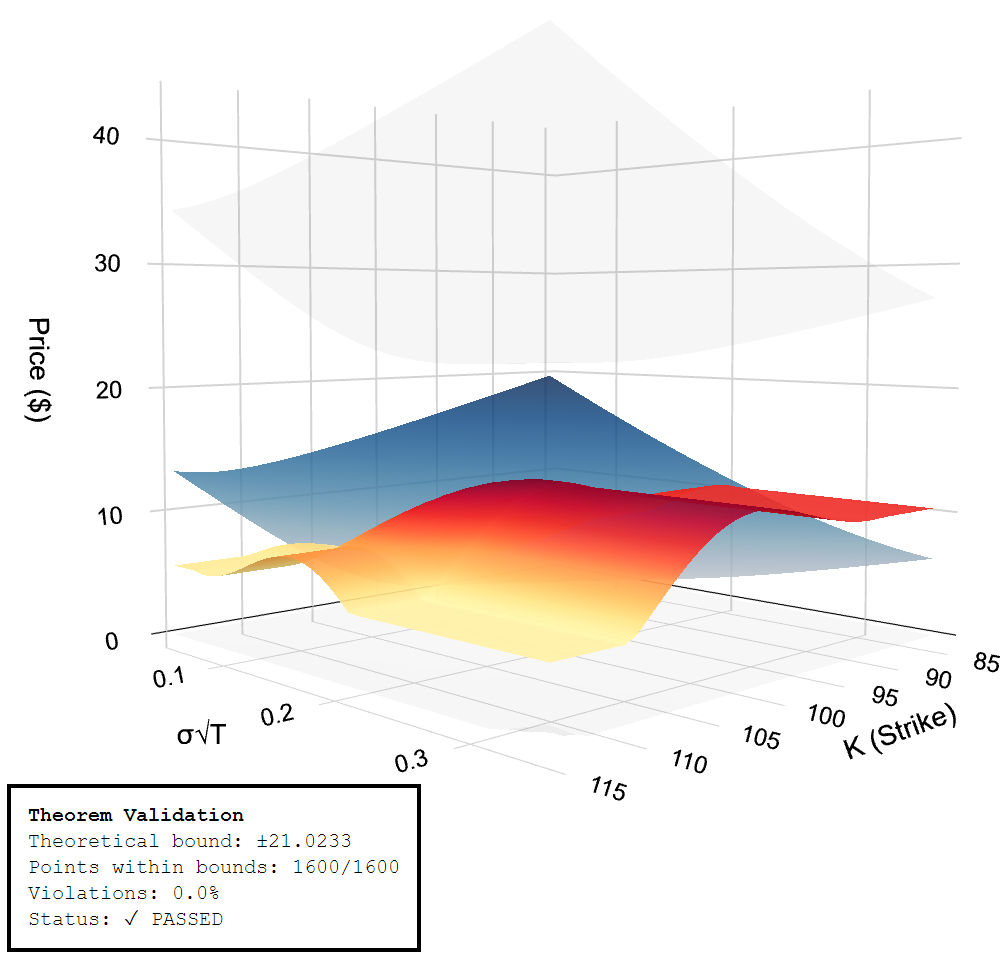}
    \caption{True Black-Scholes (blue) with theoretical error envelope $\pm\Bf/\sqrt{n}$ (grey glass surfaces).}
    \label{fig:combined_bounds}
  \end{subfigure}
  \hfill
  \begin{subfigure}[b]{0.48\textwidth}
    \centering
    \includegraphics[width=\linewidth]{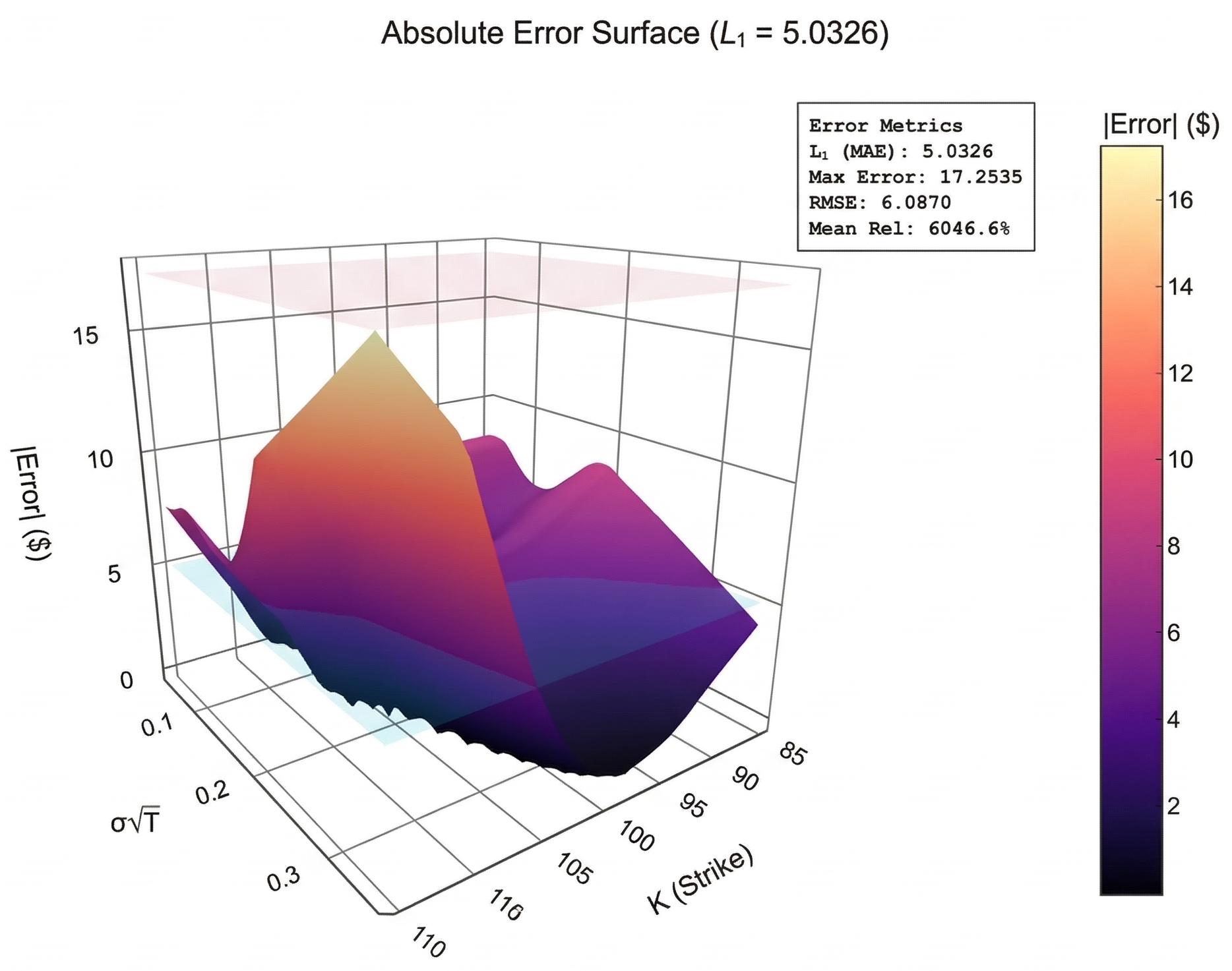}
    \caption{Absolute error surface $|\mathrm{C}_{\mathrm{BS}} - f^R_{n,\theta}|$ (Magma colorscale).  Cyan plane: MAE; dark-red plane: max error.}
    \label{fig:error_surface}
  \end{subfigure}

  \medskip

  \begin{subfigure}[b]{0.48\textwidth}
    \centering
    \includegraphics[width=\linewidth]{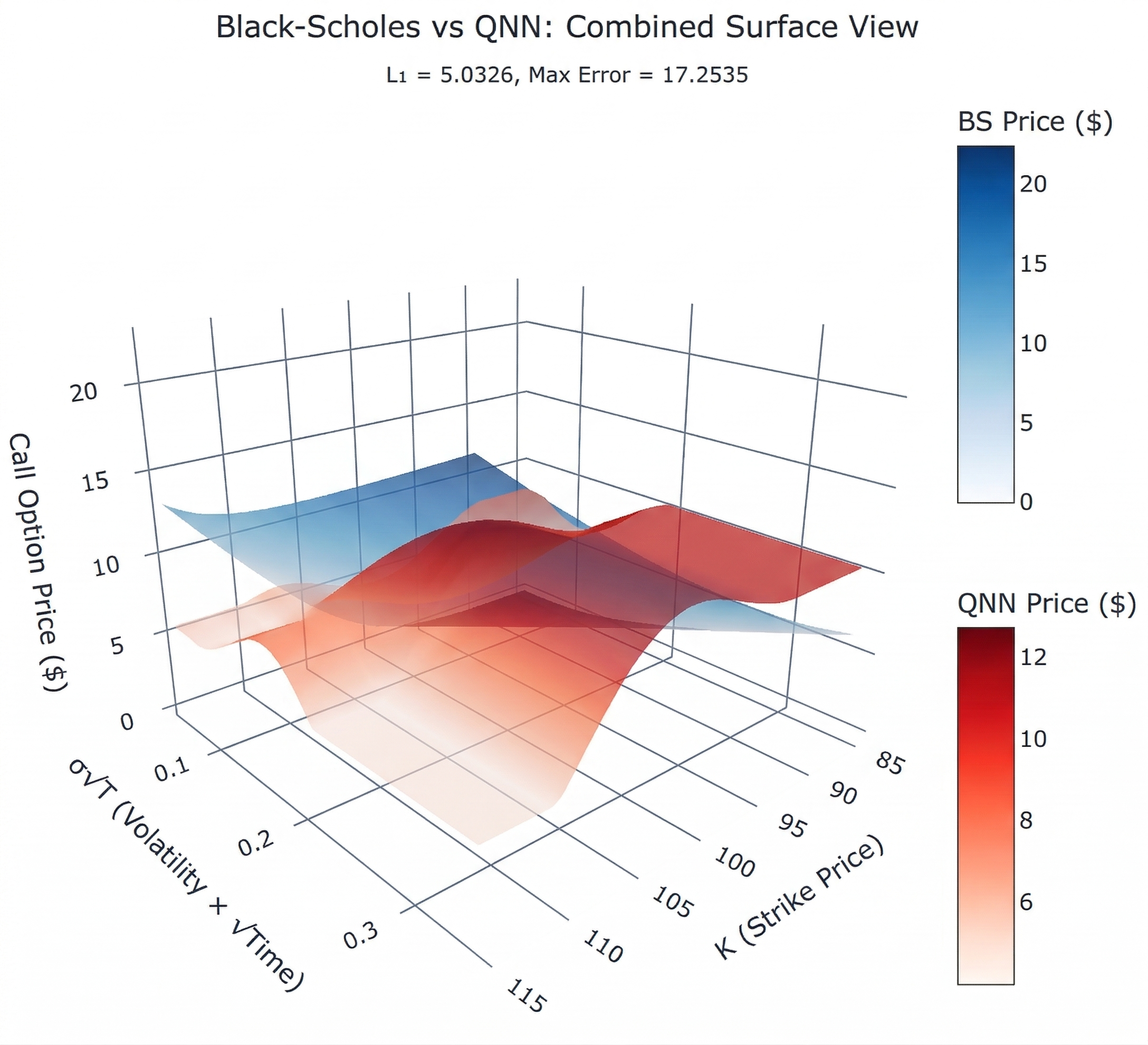}
    \caption{Overlaid surfaces: true Black-Scholes (blue) and QNN approximation (red).}
    \label{fig:combined_overlay}
  \end{subfigure}
  \hfill
  \begin{subfigure}[b]{0.48\textwidth}
    \centering
    \includegraphics[width=\linewidth]{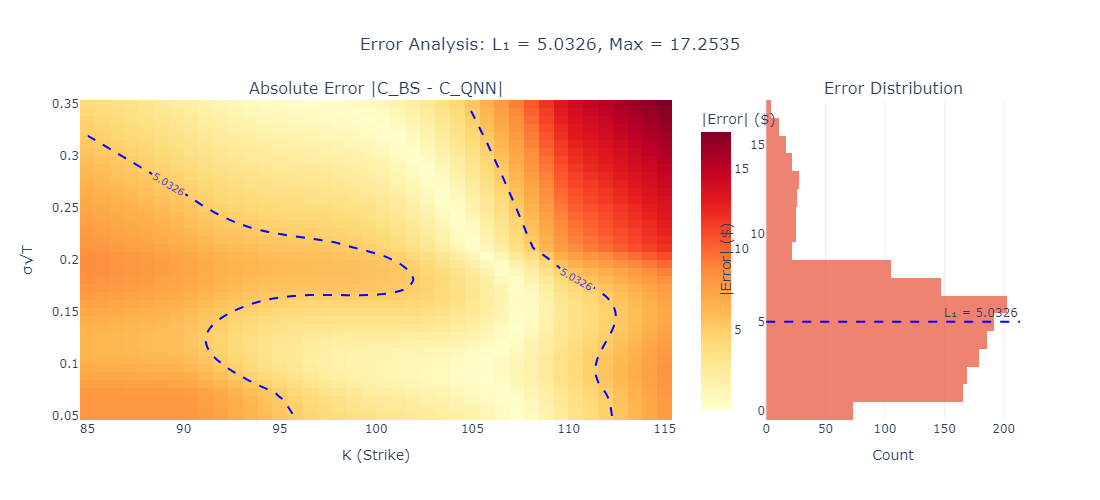}
    \caption{Heatmap of pointwise absolute error (left) and error histogram (right).  Errors peak in the deep-in-the-money region.}
    \label{fig:heatmap}
  \end{subfigure}
  \caption{Black-Scholes Put surface approximation (Method~A, $n=8$, $\nf=5$ qubits, $40\times 40$ grid, $S_{0}=100$, $r=0.03$). Theoretical bound from Example~\ref{ex:TruncatedPutBound}.}
  \label{fig:bs_surface}
\end{figure}

\begin{figure}[htbp]
    \centering
    
    \begin{subfigure}{\linewidth}
        \centering
        \includegraphics[width=\linewidth]{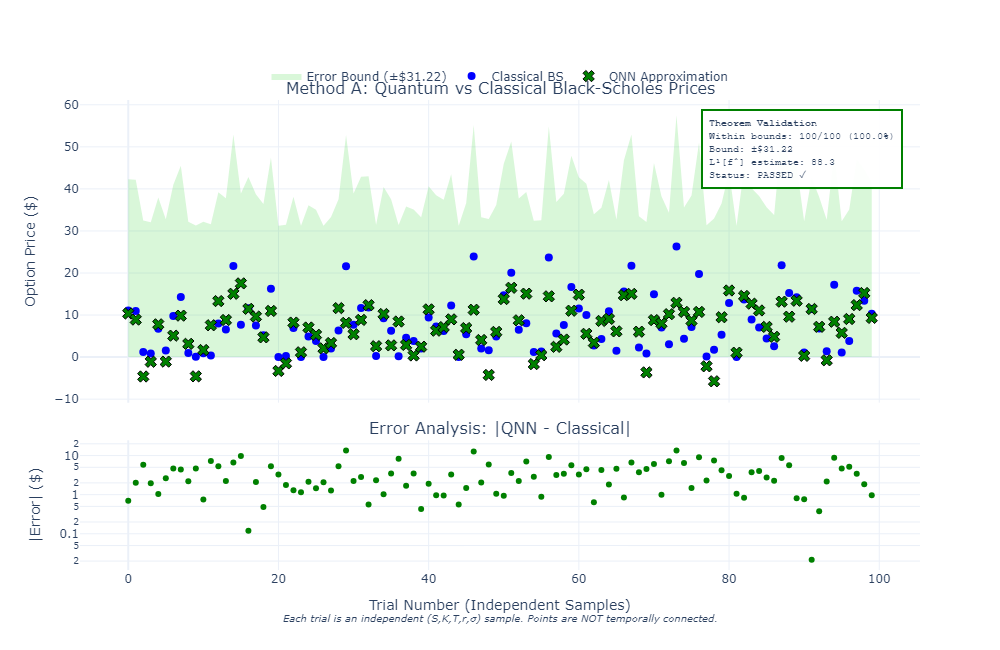}
        \caption{Method~A: QNN vs.\ classical BS prices (top) and absolute errors (bottom, log scale). Green: within bound; red: outside.}
        \label{fig:method_A}
    \end{subfigure}
    
    \vspace{0.8cm} 
    \begin{subfigure}{\linewidth}
        \centering
        \includegraphics[width=\linewidth]{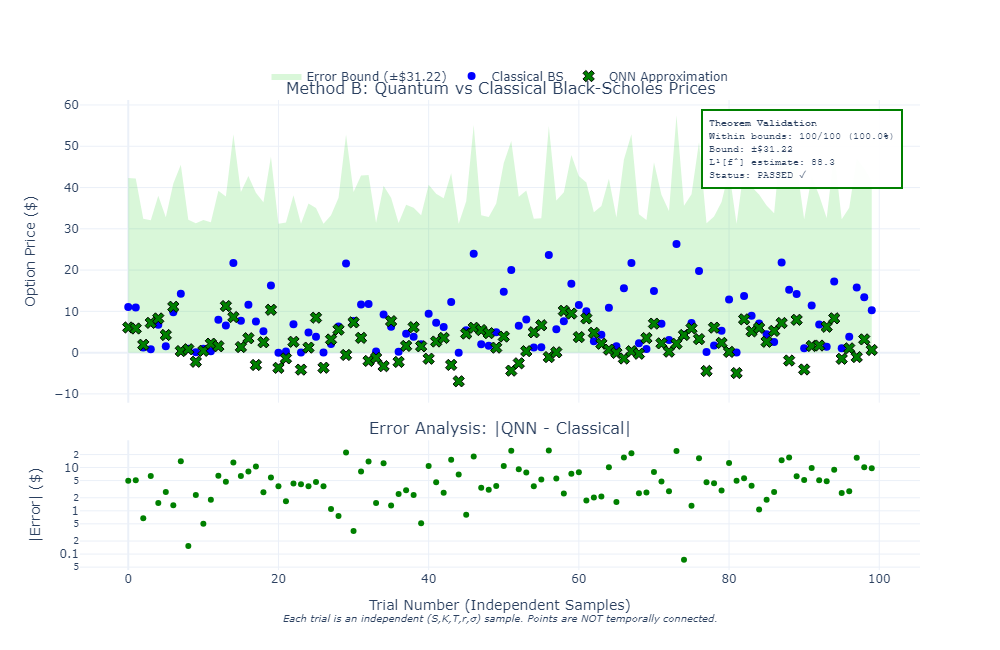}
        \caption{Method~B: same layout as~(a). Last-layer optimisation achieves competitive accuracy at substantially lower classical training cost.}
        \label{fig:method_B}
    \end{subfigure}
    
    \caption{Per-method noiseless performance. The error bound in (A)-(B) combines the approximation and statistical terms from Corollary~2.4.}
    \label{fig:method_compare_part1}
\end{figure}

\begin{figure}[htbp]
\includegraphics[width=\linewidth]{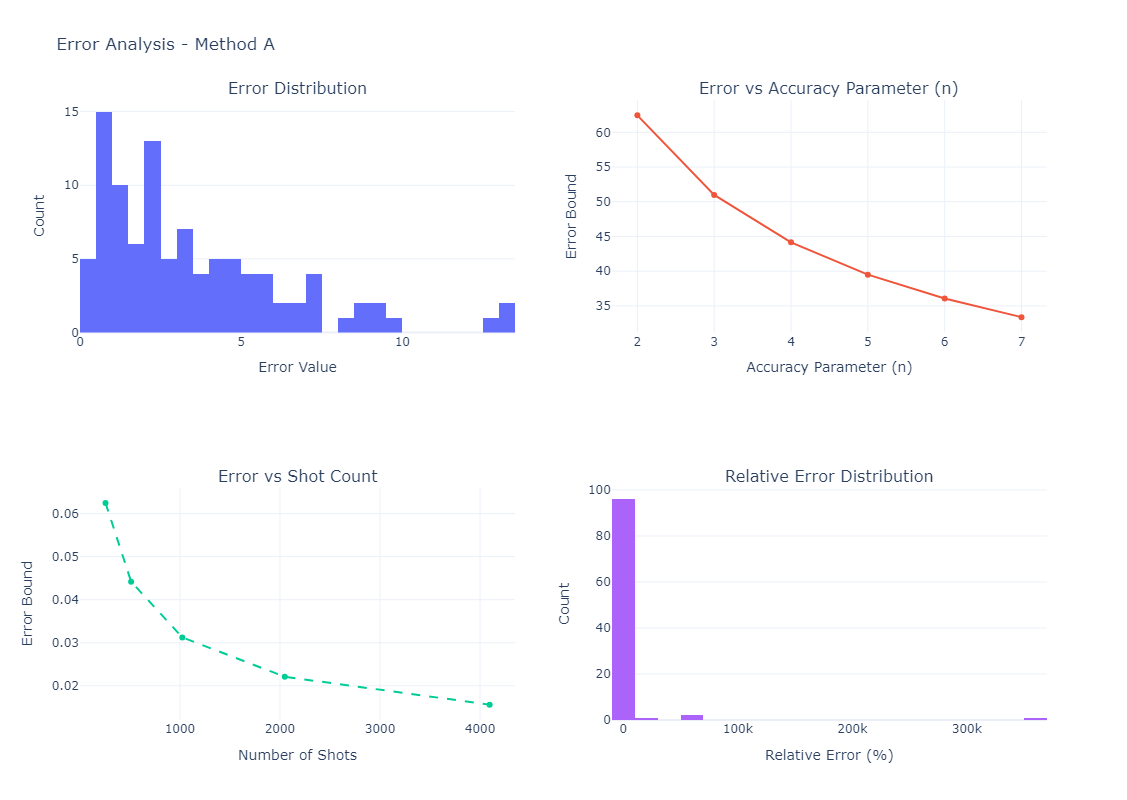}
        \caption{Error analysis for Method~A: error histogram; $\eps_n$ vs.\ $n$; statistical error vs.\ $N_{\mathrm{shots}}$; relative error histogram.}
        \label{fig:error_A}
\end{figure}

\begin{figure}[H]
  \centering
  \includegraphics[width=\linewidth]{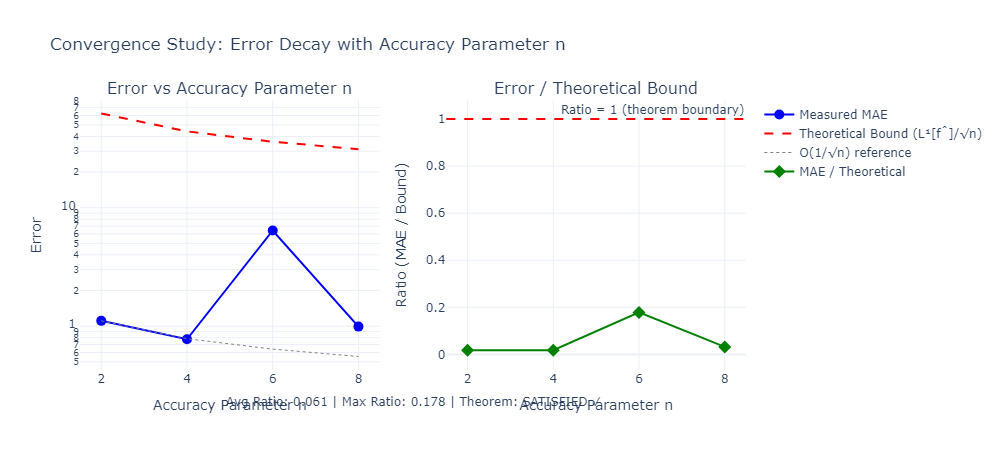}
  \caption{Convergence and dimensionality scaling (Method~A, noiseless). Left: MAE (blue circles) and sharp bound $\eps_n$ (red dashed) on log scale, with $\Oo(1/\sqrt{n})$ reference (grey dotted). Right: ratio MAE$/\eps_n$. Theoretical bounds from Example~\ref{ex:TruncatedPutBound} with worst-case parameters $\frac{K_{\max}}{S_{\min}}$, $\sigma_{\min}$, $T_{\min}$.}
  \label{fig:bs_scaling}
\end{figure}

\subsection{Figures for the noise simulation -- Section~4(d)}
\begin{figure}[H]
  \centering
  \begin{subfigure}{0.9\textwidth}
    \centering
    \includegraphics[scale=0.35]{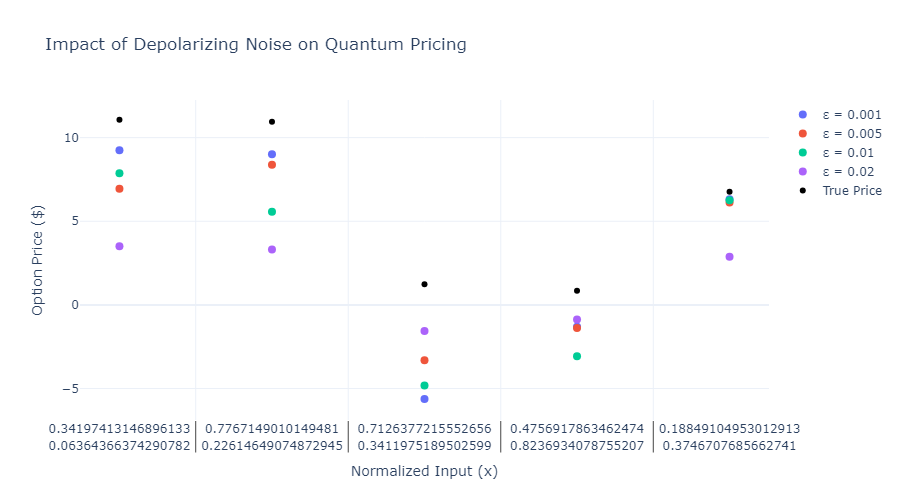}
    \caption{Noisy QNN outputs  vs.\ Black-Scholes Put prices (black) across 20 test points. Higher values of~$\eps$ contract the output towards the noise bias of Corollary~\ref{cor:depolar_qnn_output}.}
    \label{fig:noise_curves}
  \end{subfigure}
  \hfill
  \begin{subfigure}{0.9\textwidth}
    \centering
\includegraphics[scale=0.35]{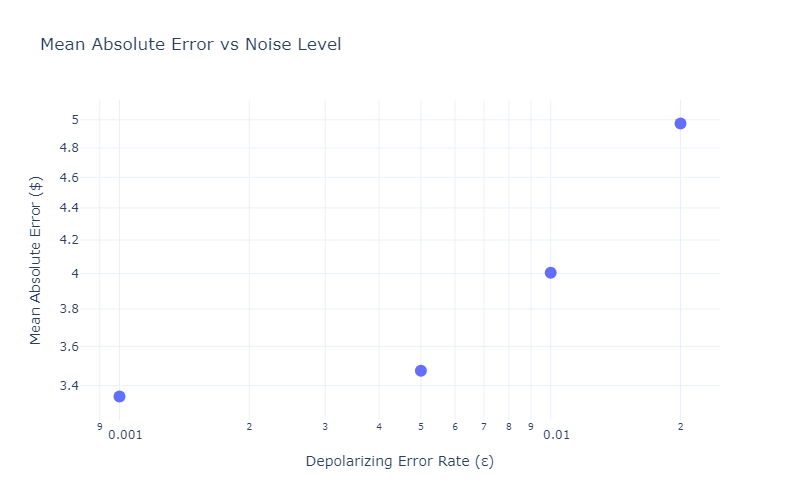}
    \caption{MAE vs.\ depolarising error rate $\eps$ (log-log). The growth is consistent with the systematic term $(1-\alpha)\|f\|_{L^2(\mu)}$ of Theorem~\ref{thm:depolar_approximation}.}
    \label{fig:noise_rmse}
  \end{subfigure}
  \caption{Depolarising noise simulation ($n=8$, Method~A,
    $N_{\mathrm{shots}} = 8192$, 20 test points). Density-matrix predictions from Proposition~3.9 agree with \texttt{AerSimulator} at all noise levels.}
  \label{fig:noise_sim}
\end{figure}

\bibliographystyle{siam}
\bibliography{References}
\end{document}